
\documentclass[journal]{IEEEtran}

\IEEEoverridecommandlockouts                              


\title{\LARGE \bf
Resilient Scheduling of Control Software Updates \\ in Radial Power Distribution Systems
}


\author{Kin Cheong Sou and Henrik Sandberg  
\thanks{K.C. Sou is with the Department of Electrical Engineering, National Sun Yat-sen University, Taiwan
        {\tt\small sou12@mail.nsysu.edu.tw}. K.C.~Sou was funded by the Ministry of Science and Technology (MOST) of Taiwan: Optimal dispatch of flexibility resources in power systems with high penetration of renewables, MOST 109-2221-E-110-016-MY2.}%
\thanks{H. Sandberg is with the Division of Decision and Control Systems, EECS, KTH Royal Institute of Technology, Stockholm, Sweden
        {\tt\small hsan@kth.se}. H.~Sandberg was funded by the Swedish Energy Agency and ERA-Net Smart Energy Systems (project RESili8, grant agreement No~883973).}%
}

\usepackage{graphicx,amsmath,xcolor,amssymb,amsthm,multirow,cite}
\usepackage{algpseudocode,algorithm}

\newcommand*{\R}{\mathbb{R}}
\newtheorem{proposition}{Proposition}
\newtheorem{lemma}{Lemma}

\algrenewcommand\algorithmicrequire{\textbf{Input:}}
\algrenewcommand\algorithmicensure{\textbf{Output:}}

\newcommand{\minuseq}{\mathrel{-}=}

\begin{document}

\maketitle
\thispagestyle{empty}
\pagestyle{empty}

\begin{abstract}

In response to newly found security vulnerabilities, or as part of a moving target defense, a fast and safe control software update scheme for networked control systems is highly desirable. We here develop such a scheme for intelligent electronic devices (IEDs) in power distribution systems, which is a solution to the so-called software update rollout problem. This problem seeks to minimize the makespan of the software rollout, while guaranteeing safety in voltage and current at all buses and lines despite possible worst-case update failure where malfunctioning IEDs may inject harmful amounts of power into the system. Based on the nonlinear DistFlow equations, we derive linear relations relating software update decisions to the worst-case voltages and currents, leading to a decision model both tractable and more accurate than previous models based on the popular linearized DistFlow equations. Under reasonable protection assumptions, the rollout problem can be formulated as a vector bin packing problem and instances can be built and solved using scalable computations. Using realistic benchmarks including one with 10,476 buses, we demonstrate that the proposed method can generate safe and effective rollout schedules in real-time.


\end{abstract}

\section{INTRODUCTION}

Critical infrastructures such as power and water distribution networks, and cyber-physical systems (CPSs) in general, have in the past decade become targets of cyberattacks~\cite{hemsley+18,cardenas2019}. Simultaneously, cybersecurity features in these systems are lacking. Control systems have been designed to meet safety requirements, often meaning a preference for low complexity solutions that minimize time delay and prevents the adoption of many standard computer security features. It has now become urgent to develop security solutions that take the special safety requirements of CPSs into account.

An essential difference between CPS and regular computer security is that ``software patching and frequent updates, are not well suited for control systems''~\cite{cardenas+08}. Software updates are problematic in that they may require a complex reboot, which may come with an operational cost and a safety risk. It is not uncommon that control software is updated infrequently, and systems may run for long time even with known vulnerabilities. A practically relevant research problem is to develop control algorithms that can be safely updated in real time, to patch newly discovered security weaknesses or as part of moving target or software rejuvenation defense~\cite{griffioen2019moving,kanellopoulos+20,Romagnoli+23}.

This paper considers a control software update rollout problem for power distribution systems introduced in \cite{de2020minimum}, and extends the results in \cite{sou2022resilient}. The rollout problem seeks to arrange the software updates of IEDs (e.g., smart inverters) into a minimum-time schedule, while guaranteeing power system operational safety despite worst-case update failure. The problem is not solvable using traditional maintenance planning techniques~\cite{froger+16} due to its intricate cyber-physical relationships involved. A central challenge here is an accurate yet tractable relation between the software update decisions and their worst-case consequences to the system states (e.g., voltages and currents). We derive the desired relation using the \emph{nonlinear} DistFlow equations~\cite{BaranWu89,BaranWu89B} as opposed to the popular linearized DistFlow equations adopted in \cite{de2020minimum}. The nonlinear DistFlow equations more accurately describe voltages and explicitly model line currents absent in the linearized equations. We numerically demonstrate the ramifications of the DistFlow models in maintaining safety standard. Compared with \cite{sou2022resilient}, this paper presents major upgrades to the current and voltage safety limit constraints to reduce the conservatism of the rollout problem. Our numerical case studies indicate that the rollout schedules obtained in this paper are effective over a wide variety of test cases. In addition, this paper features streamlined computation procedures to set up and solve the rollout problem in real-time. These include fixed point iterations to compute voltage and current bounds (required to build a rollout problem instance) and a modified heuristic algorithm to solve the associated bin-packing problem. These innovations eliminate the need for mixed integer (bi)linear programming in \cite{sou2022resilient}, enabling practical real-time large-scale rollout scheduling. For instance, our procedure can find a non-trivial schedule in a 10,476-bus case in less than 3 seconds. This was impossible in \cite{sou2022resilient}. Our novel voltage and current bounds are potentially useful in robust and contingency-based AC optimal power flow problems, which has received significant interest in the past few years. See, for example, \cite{Carleton+17,Molzahn+18,Louca+19,wu2018robust,coffrin2018relaxations,liu2021optimal}.

Smart distribution grids play an essential role in society's transition into a net-zero energy system and thereby achieve highly set climate goals. The transition requires integration of a wide variety of controllable energy devices, both for consumption and for supply. The interface between power supply (grid) and demand (consumer) is sometimes referred to as the \emph{grid edge}~\cite{siemens19}. The grid edge will turn distribution power grids, traditionally not very automated, into complex distributed control systems. The many local control loops in the grid edge simultaneously present an increased attack surface, and will require resilient and systematic controller update/patching schemes to handle newly discovered vulnerabilities or to respond to contingencies~\cite{kintzler+18,shelar2016security,shelar2021evaluating}, while accounting for possible adverse effects of failed control updates.


This paper is organized as follows. Section~\ref{sec:system_modeling} introduces notations, distribution system model and its operational requirements. In Section~\ref{sec:problem_statement} the software update rollout problem is defined. The exact mathematical model is unfit for real-time applications. Thus, a tractable approximation is derived. Section~\ref{sec:solution} summarizes the solution procedure for the rollout problem. Section~\ref{sec:demonstration} presents case studies to demonstrate that the proposed procedure is time-efficient and is able to deliver effective and safe rollout schedules for large systems.

\section{System Modeling} \label{sec:system_modeling}

\subsection{Mathematical Notations} \label{subsec:notations}
We define the following notations: the vector ${\bf 1}$ (resp., ${\bf 0}$) is the all-one (resp., all-zero) vector, and $e_i$ for $i = 1,2,\ldots$ is the $i^{\text{th}}$ unit vector. By default, the symbol $j$ means $\sqrt{-1}$. For a set of positive integers $\mathcal{N}$ (details in Section~\ref{subsec:distribution_system_notations}) and $\mathcal{I} \subseteq \mathcal{N}$, the symbol ${\bf 1}_{\mathcal{I}} \in \{0,1\}^{\mathcal{N}}$ is the 0-1 binary indicator vector with support $\mathcal{I}$. For any vector $v$, $\text{diag}(v)$ or $D_v$ are the diagonal matrix with the diagonal entries defined by $v$. For any vector $v$ and scalar $a$ (positive or negative), the symbol $v^a$ denotes the entry-wise exponentiation (i.e., the $i$-th entry of $v^a$ is $v_i^a$). For any two vectors $x$ and $y$ of the same length, the symbol $x \odot y$ denotes the entry-wise (Hadamard) product of $x$ and $y$. For any two matrices $A$ and $B$, the symbol $A \otimes B$ denotes the Kronecker product of $A$ and $B$.

\subsection{Power Distribution System Modeling} \label{subsec:distribution_system_notations}

We consider single-phase radial power distribution systems. This can potentially be interpreted as the positive sequence approximation of a three-phase system. We assume that the system operates in steady-state and hence all electrical quantities (e.g., current, voltage) can be represented using per unit phasors. We use the following notations to describe the system:
\begin{itemize}
    \item $N$: number of non-reference buses (also number of lines)
    
    \item $\mathcal{N} = \{1,\ldots,N\}$: the set of all non-reference buses. Bus 0 is the reference bus (also called the slack bus).
    
    \item $\mathcal{L} = \{1, \ldots,N\}$: the set of all lines. Each line has a reference direction for line current and line power flow. Reference direction points away from bus 0. A line is labeled by the bus that it points to (e.g., line $n$ points to bus $n$ and no other line points to $n$ due to radiality).
    
    \item $\pi_n \in \mathcal{N} \cup \{0\}$: the ``parent'' bus of $n \in \mathcal{N}$ (so that line $n$ goes from $\pi_n$ to $n$).

    \item $d(n) \subseteq \mathcal{N}$: for any $n \in \mathcal{N}$, $d(n)$ denotes the set of descendants of $n$ including $n$ itself. That is, $m \in d(n)$ if and only if the (only) path from bus 0 to $m$ traverses $n$.
    
    \item $\tilde{A}$: $N \times (N+1)$ line-bus incidence matrix. For $(m,n) \in \mathcal{L} \times (\mathcal{N} \cup \{0\})$, $\tilde{A}_{mn} = 1$ if line $m$ leaves bus $n$, $\tilde{A}_{mn} = -1$ if line $m$ enters bus $n$ and $\tilde{A}_{mn} = 0$ otherwise.
    
    \item $A$: $N \times N$ submatrix of $\tilde{A}$ with the first column $a_0$ removed (i.e., $\tilde{A} = \begin{bmatrix} a_0 & A \end{bmatrix}$).

    \item $\mathcal{D}$: $\{0,1\}^{N \times N}$ descendant matrix so that $\mathcal{D}_{nm} = 1$ if and only if $m \in d(n)$. It can be verified that $\mathcal{D}^\top = -A^{-1}$ when row $k$ of $A$ corresponds to line $k$ for all $k \in \mathcal{L}$.
    
    \item All shunt elements are ignored and $z = r + jx$ is an $N$-vector of line series impedance; $r > {\bf 0}$ is vector of series resistances and $x > {\bf 0}$ is vector of series reactances.
    
    \item $D_r$, $D_x$: $N \times N$ positive diagonal matrices with $D_r = \text{diag}(r)$ and $D_x = \text{diag}(x)$. Also, $D_z := D_r + j D_x$.
    
    \item $P$, $Q$, $S$: $N$-vectors of active, reactive and apparent power flows for all lines. $P$, $Q$ and $S$ are defined at the sending ends of the lines (e.g., $P_n$ is defined at bus $\pi_n$).
    
    \item $p$, $q$, $s$: $N$-vectors of net injections of active power, reactive power, apparent power for non-reference buses.
    
    \item $p = p^G - p^L$, $q = q^G - q^L$ with $p^G$, $q^G$, $p^L$, $q^L$ vectors of active power generation, reactive power generation, active power load, reactive power load for non-reference buses.
    
    \item $\ell$: $N$-vector of squared magnitude line currents.
    
    \item $v$: $N$-vector of magnitude voltage of non-reference buses.
    
    \item $\nu$: $N$-vector of squared magnitude voltage (i.e., $\nu = v^2$). At bus 0, squared magnitude voltage $\nu_0$ is constant.
\end{itemize}
For all vectors above, subscript $n$ means the $n^{\text{th}}$ entry of the vector. For instance, $p^G_n$ is the active power generation injection to bus $n \in \mathcal{N}$, while $\ell_n$ is the squared current for line $n \in \mathcal{L}$.
Fig.~\ref{fig:line_segment} illustrates a segment of the distribution system with the relevant electrical quantities.
\begin{figure}
    \centering
    \includegraphics[width=60mm]{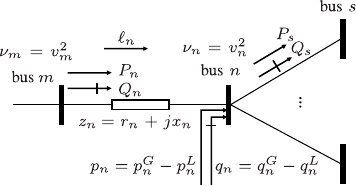}
    \caption{A segment of distribution system with the relevant quantities}
    \label{fig:line_segment}
\end{figure}

For a radial distribution system with shunt elements ignored, the electrical quantities $\nu, \ell, P, Q, p, q$ are related by the (nonlinear) DistFlow equation \cite{BaranWu89, BaranWu89B} as follows: 
\begin{subequations}
\label{eqn:DF}
\begin{align}
    p &= A^{\top} P + D_r \ell \label{eqn:DF_Pline} \\
    q &= A^\top Q + D_x \ell \label{eqn:DF_Qline} \\
    A\nu + \nu_0 a_0 &= 2 D_r P + 2 D_x Q - (D_r^2 + D_x^2) \ell \label{eqn:DF_Ohm} \\
    \ell_n &= \frac{P_n^2+Q_n^2}{\nu_{\pi_n}}, \quad \forall n = 1, 2, \ldots, N \label{eqn:DF_S}
\end{align}
\end{subequations}
Note that $A^{-1} = -\mathcal{D}^\top$ and $A^{-1} a_0 = -{\bf 1}$ (since $\tilde{A} {\bf 1} = {\bf 0}$). The variables $P$ and $Q$ can be eliminated from \eqref{eqn:DF} to obtain
\begin{subequations} \label{eqn:DF_vi}
\begin{align}
    \nu &= \nu_0 {\bf 1} + 2 R p + 2 X q + M \ell \label{eqn:DF_nu} \\
    \ell &= (\nu)^{-1} \odot\big| \mathcal{D}(p + j q) - (\mathcal{D} - I) D_z \ell \; \big|^2 \label{eqn:DF_l}
\end{align}
\end{subequations}
where
\begin{align} \label{eqn:RXM}
\begin{split}
    R &= \mathcal{D}^\top D_r \mathcal{D} \\
    X &= \mathcal{D}^\top D_x \mathcal{D} \\
    M &= \mathcal{D}^\top D_r (I - 2 \mathcal{D}) D_r + \mathcal{D}^\top D_x (I - 2 \mathcal{D}) D_x
\end{split}
\end{align}
Equation \eqref{eqn:DF_vi} implies that $\nu$ and $\ell$ are functions of net power injections $(p,q)$, motivating the shorthand $\nu = \nu(p,q)$ and $\ell = \ell(p,q)$. Also, since $\mathcal{D} \in \{0,1\}^{N \times N}$ with unit diagonal entries and $r$, $x$ are positive, \eqref{eqn:RXM} implies that all entries of $R$ and $X$ are nonnegative, while all entries of $M$ are non-positive.

\subsection{Distribution System Setting and Grid Code} \label{subsec:DSS_grid_code}
We assume that every non-reference bus in the distribution system is equipped with a smart inverter to control and monitor its distributed generation. The inverter-interfaced distributed generation is $p^G_n \ge 0$ for bus $n \in \mathcal{N}$. In addition, the inverter can provide reactive power support with reactive power injection $q^G_n$, which is not sign restricted. We assume the rating of the inverter at bus $n$ is $C_n$ restricting its apparent power injection. If bus $n$ has no generation, then $C_n = 0$. The vector of all inverter ratings is denoted by $C$. Thus, the active and reactive power injections of the inverters satisfy
\begin{equation} \label{eqn:inverter_constraints}
    p^G_n \ge 0, \quad \big| p^G_n + j q^G_n \big| \le C_n, \quad \forall n \in \mathcal{N}
\end{equation}
It is assumed that the operator knows the inverter generation set points denoted by $\hat{p}_n^G$ and $\hat{q}_n^G$ and the load estimates denoted by $\hat{p}^L_n$ and $\hat{q}^L_n$. 
In normal operating conditions, inverter set points $\hat{p}^G$ and $\hat{q}^G$ are chosen so that the grid code is satisfied. In this paper, the grid code specifies that the steady-state $\nu$ and $\ell$ should satisfy the following safety limits:
\begin{subequations}
\label{eqn:lv_bounds}
\begin{align}
    & \underline{\nu} \le \nu \le \overline{\nu}, \quad \text{with given} \;\; \underline{\nu} \in \R^N \; \text{and given} \; \overline{\nu} \in \R^N, \label{eqn:v_bounds}\\
    & 0 \le \ell \le \overline{\ell}, \quad \text{with given} \;\; \overline{\ell} \in \R^N \label{eqn:l_bounds}
\end{align}
\end{subequations}
We note that $\ell \ge 0$ is not explicitly enforced as it is implied by \eqref{eqn:v_bounds} and \eqref{eqn:DF_S}. Typical values of the safety limits are: 0.9 pu for $(\underline{\nu})^{1/2}$, 1.1 pu for $(\overline{\nu})^{1/2}$ and a few hundred amperes (e.g., 600 A) for $(\overline{\ell})^{1/2}$.

\section{Problem Statement and Formulation} \label{sec:problem_statement}

\subsection{Rollout Problem Setting and Statement}

As explained in the introduction, we envision a scenario where, due to security or operational considerations, the operator needs to remotely update the software or firmware of all inverters through some communication network.
To minimize possible disruption, the operator wishes to schedule the updates to finish as soon as possible (e.g., all at once if possible). However, some or all updates may fail. 
Software update failure for the inverter at a bus is modeled by the condition that, instead of following its power set point command, the inverter may inject an uncontrollable amount of power subjected to its physical limit in~\eqref{eqn:inverter_constraints}. We assume that each bus is equipped with a relay to promptly detect out-of-range voltage or current (usually within a few milliseconds). Then the breaker or other protection will act to isolate the faulty inverter from the rest of the system and the update for the inverter is rolled back. We assume that $\Delta \tau$ is the universal fault clearing time for the system. In other words, if an inverter experiences software update failure, it may inject uncontrolled amount of power for up to $\Delta \tau$ seconds, and then its power output is back to normal.

It is not desirable to schedule too many updates at the same time lest the updates may fail simultaneously and inject a dangerous level of uncontrolled power causing power outage or even risk to human lives. Thus, the operator is faced with an update scheduling problem where s/he seeks to minimize the makespan of the entire process while guaranteeing the safe operation of the system despite the worst-case update failure due to the schedule. This problem is referred to as the \emph{software update rollout problem} or the \emph{rollout problem} for short.

Safe operation means that the steady-state squared voltage $\nu$ and steady-state squared line current $\ell$ satisfy the grid code constraints in~\eqref{eqn:lv_bounds}. The transient voltage and current from a normal operating condition to the situation after update failure may be different from their respective steady-state values. However, we assume that the system possesses sufficient ride-through capability so that if~\eqref{eqn:lv_bounds} is satisfied in steady states then the transient voltage and current will satisfy their respective transient safety requirements.

Formally, the safety requirement of the rollout problem can be stated as follows:
For any $\mathcal{I} \subseteq \mathcal{N}$, define $\mathcal{S}(\mathcal{I})$ by
\begin{equation} \label{eqn:SI}
    \begin{split}
        \mathcal{S}(\mathcal{I}) := &\big\{(p,q) \mid p = p^G - \hat{p}^L, \; q = q^G - \hat{q}^L \\
        & \big| p^G_n + j q^G_n \big| \le C_n, \; p^G_n \ge 0, \; \text{if $n \in \mathcal{I}$} \\
        & p^G_n = \hat{p}^G_n, \; q^G_n = \hat{q}^G_n, \; \text{if $n \in \mathcal{N} \setminus \mathcal{I}$} \big\}
    \end{split}
\end{equation}
as the set of all possible $(p,q)$ due to update failures at buses in $\mathcal{I} \subseteq \mathcal{N}$. Then, it is required that all solutions $\nu(p,q)$ and $\ell(p,q)$ of \eqref{eqn:DF} due to $(p,q) \in \mathcal{S}(\mathcal{I})$ satisfy the safety limits in \eqref{eqn:lv_bounds}. Subsequently, we derive linear relationships between ${\bf 1}_{\mathcal{I}}$ (i.e., the decision indicator vector of the rollout problem) and the worst-case $\nu(p,q)$ and $\ell(p,q)$ for $(p,q) \in \mathcal{S}(\mathcal{I})$ (used to describe the safety constraints in the rollout problem).


\subsection{Universal voltage lower bound and current upper bound} \label{subsec:vlb_iub}
To derive the safety constraints of the rollout problem, we need bounds of $\nu$ and $\ell$ as simple functions of $(p,q)$. Since $\ell \ge 0$ and $M$ has non-positive entries in \eqref{eqn:DF_vi}, an immediate upper bound of $\nu$ arises when the term $M \ell$ in \eqref{eqn:DF_nu} is ignored. This will be discussed in Section~\ref{sec:v_UB}. On the other hand, lower bound of $\nu$ and upper bound of $\ell$ are intertwined. To obtain the bounds, for any $n \in \mathcal{N}$ we denote squared voltage lower bound $\nu^L_n$ and squared current upper bound $\ell^U_n$ as
\begin{subequations} \label{opt:vi_bounds}
    \begin{align}
        \nu^L_n &:= \min\limits_{p, q, P, Q, \nu, \ell} \; \nu_n \quad \text{subject to \eqref{eqn:DF} and \eqref{eqn:inverter_constraints}} \label{opt:vi_bounds_V} \\
        \ell^U_n &:= \max\limits_{p, q, P, Q, \nu, \ell} \; \ell_n \quad \text{subject to \eqref{eqn:DF} and \eqref{eqn:inverter_constraints}} \label{opt:vi_bounds_l}
    \end{align}
\end{subequations}
Then, Sections~\ref{sec:l_UB} and \ref{sec:v_LB} discuss how $\nu^L$ and $\ell^U$ can be used in \eqref{eqn:DF_vi} to obtain the desired voltage and current bounds with respect to $(p,q)$. Problem \eqref{opt:vi_bounds} can be solved as a non-convex bilinear program (due to \eqref{eqn:DF_S}). However, this may not be suitable for real-time large-scale applications. Instead, we propose to quickly estimate lower bounds of $\nu^L$ and upper bounds of $\ell^U$ using the fixed point iterations in Algorithm~\ref{alg:fixed_point}.
\begin{algorithm}
\caption{Fixed point iterations to bound $\nu^L$ and $\ell^U$}
\label{alg:fixed_point}
\begin{algorithmic}[1]
    \Require Matrices $\mathcal{D}$ and $D_z$ in Section~\ref{subsec:distribution_system_notations}, vectors $\hat{p}^L$, $\hat{q}^L$, $C$ in Section~\ref{subsec:DSS_grid_code}, $R$, $X$, $M$ in \eqref{eqn:RXM} and tolerance $\epsilon > 0$
    \Ensure $\hat{\nu}^L \le \nu^L$ and $\hat{\ell}^U \ge \ell^U$
    \State Denote $a := \mathcal{D} \hat{p}^L$, $b := \mathcal{D} \hat{q}^L$, $r := \mathcal{D} C$, $\theta \in \R^N$ s.t.
    \begin{align*}
        \theta_n := \begin{cases}
            \tan^{-1}(b_n/a_n), & \text{if $a_n \neq 0$ or $b_n \neq 0$} \\
            0, & \text{if $a_n = b_n = 0$}
        \end{cases}
    \end{align*}
    \State Define
    \begin{align*}
        \bar{S} := \max \Big\{ &|a \! + \! j(b \!+ \! r)|, |a \! + \! j(b \!- \!r)|, |a \!+\!jb \!- \! r \! \odot \! e^{j \theta}| \Big\}
    \end{align*}
    where ``max'' is interpreted row-wise
    \State Initialize $v^0 \in \R^N$ and $i^0 \in \R^N$ such that $v^0 \le (\nu^L)^{1/2}$ and $i^0 \ge (i^U)^{1/2}$ and $k \leftarrow 0$
    \While{$\|v^k - v^{k-1}\| > \epsilon \;\; \text{or} \;\; \|i^k - i^{k-1}\| > \epsilon$}
    \State $v^{k+1} = \big(\nu_0 {\bf 1} \! - \! 2 R \hat{p}^L - 2 X (\hat{q}^L \! + \! C) \! + \! M (i^k)^2 \big)^{1/2}$
    \State $i^{k+1} = \big(v^k)^{-1} \odot (\bar{S} + |(\mathcal{D} - I) D_z (i^k)^2 |\big)$
    \State $k \leftarrow k+1$
    \EndWhile
    \State Return $\hat{\nu}^L = (v^k)^2$ and $\hat{\ell}^U = (i^k)^2$
\end{algorithmic}
\end{algorithm}

Three questions pertain to Algorithm~\ref{alg:fixed_point}: (a) Does Algorithm~\ref{alg:fixed_point} converge? (b) Is it true that $\hat{\nu}^L \approx \nu^L$ and $\hat{\ell}^U \approx \ell^U$? (c) Is it true that $\hat{\nu}^L \le \nu^L$ and $\hat{\ell}^U \ge \ell^U$? Question (a) regarding convergence can be verified online. Further, our extensive simulations indicate quick convergence as long as the initial guess $i^0$ is not too large (so $v^k$ stays real-valued). As explained in Section~\ref{subsec:bound_quality}, question (b) regarding approximation quality can also be checked online. However, question (c) regarding bounding property cannot be verified numerically and it must be proven analytically. Next, we provide an intuition to justify (b) and (c). Then, a statement is presented to prove (c).

Suppose the while-loop of line 4-8 in Algorithm~\ref{alg:fixed_point} converge with limits $\ell^\star := (\lim\limits_{k \to \infty} i^k)^2$ and $\nu^\star := (\lim\limits_{k \to \infty} v^k)^2$. Then, $i^\star$ and $\nu^\star$ are characterized by
\begin{subequations} \label{eqn:FP_vi}
\begin{align}
    \nu^\star &= \nu_0 {\bf 1} - 2 R \hat{p}^L - 2 X (\hat{q}^L + C) + M \ell^\star \label{eqn:FP_v} \\
    \ell^\star &= {(\nu^\star)}^{-1} \odot \big(\bar{S} + |(\mathcal{D} - I) D_z \ell^\star |\big)^2 \label{eqn:FP_i}
\end{align}
\end{subequations}
which resembles \eqref{eqn:DF_vi}, except that the right-hand-sides of \eqref{eqn:FP_vi} are optimized with respect to $(p,q)$ subject to \eqref{eqn:inverter_constraints}. Indeed, due to the fact that $R$ and $X$ in \eqref{eqn:RXM} have nonnegative entries and Lemma~\ref{thm:abr_max} in Appendix A the following holds: for any $n \in \mathcal{N}$, entries $\nu^\star_n$ in \eqref{eqn:FP_v} and $\ell^\star_n$ in \eqref{eqn:FP_i} can be rewritten as
\begin{align} \label{eqn:FP_vi1}
\begin{split}
    \nu^\star_n &= \underset{(p,q) \; \text{in} \; \text{\eqref{eqn:inverter_constraints}}}{\min} e_n^\top \left( \nu_0 {\bf 1} + 2 R p + 2 X q + M \ell^\star \right) \\
    \ell^\star_n &= \underset{(p,q) \; \text{in} \; \text{\eqref{eqn:inverter_constraints}}}{\max}  e_n^\top \big( {(\nu^\star)}^{-1} \odot \big(|\mathcal{D}(p + j q)| \! + \! |(\mathcal{D} - I) D_z \ell^\star |\big)^2 \big) \\
    &\ge \underset{(p,q) \; \text{in} \; \text{\eqref{eqn:inverter_constraints}}}{\max} e_n^\top \big( {(\nu^\star)}^{-1} \odot \big|\mathcal{D}(p + j q) - (\mathcal{D} - I) D_z \ell^\star \big|^2 \big)
\end{split}
\end{align}
where ``$(p,q)$ in \eqref{eqn:inverter_constraints}'' means $(p,q)$ satisfying \eqref{eqn:inverter_constraints} in \eqref{eqn:FP_vi1}. The relations in \eqref{eqn:FP_vi1} have two implications. Firstly, the resemblance of \eqref{eqn:FP_vi1} to \eqref{eqn:DF_vi} (equivalent to \eqref{eqn:DF}) and the optimization suggest that $\nu^\star \approx \nu^L$ and $\ell^\star \approx \ell^U$ (hence $\hat{\nu}^L \approx \nu^L$ and $\hat{\ell}^U \approx \ell^U$). Secondly, the optimization in \eqref{eqn:FP_vi1} suggests the inequalities $\nu^\star \le \nu^L$ and $\ell^\star \ge \ell^U$ and hence $\hat{\nu}^L \le \nu^L$ and $\hat{\ell}^U \ge \ell^U$. Indeed, this can be verified by the following statement, whose proof can be found in Appendix B.
\begin{proposition} \label{thm:fixed_point}
    For Algorithm~\ref{alg:fixed_point}, it holds that $v^k \le (\nu^L)^{1/2}$ and $i^k \ge (\ell^U)^{1/2}$ for all $k \ge 0$.
\end{proposition}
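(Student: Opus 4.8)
\emph{Proof idea.} I would prove both inequalities simultaneously by induction on the iteration index $k$. The base case $k=0$ is exactly the initialization requirement in line~3 of Algorithm~\ref{alg:fixed_point}: one picks a small positive $v^0$ with $v^0 \le (\nu^L)^{1/2}$ (possible since physically $\nu^L > {\bf 0}$) and a sufficiently large $i^0$ with $i^0 \ge (\ell^U)^{1/2}$. For the inductive step I would assume $v^k \le (\nu^L)^{1/2}$ and $i^k \ge (\ell^U)^{1/2}$ and establish the same for $k+1$; note both updates in lines~5--6 depend only on $(v^k,i^k)$, so there is no ordering issue within one step.

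\emph{Voltage bound.} The cleanest route is to show that $(v^{k+1}_n)^2 \le \nu_n$ at \emph{every} point $(p,q,P,Q,\nu,\ell)$ feasible for \eqref{eqn:DF} and \eqref{eqn:inverter_constraints}; taking the infimum over such points then gives $(v^{k+1}_n)^2 \le \nu^L_n$. Using \eqref{eqn:DF_nu} (which feasible points inherit from \eqref{eqn:DF}) and cancelling the $\nu_0 {\bf 1}$ terms, $\nu_n - (v^{k+1}_n)^2 = e_n^\top\big(2R(p+\hat p^L) + 2X(q+\hat q^L+C) + M(\ell-(i^k)^2)\big) = e_n^\top\big(2Rp^G + 2X(q^G+C) + M(\ell-(i^k)^2)\big)$, where $p^G \ge {\bf 0}$ and $q^G+C \ge {\bf 0}$ componentwise (the latter from $|p^G_m + jq^G_m| \le C_m$). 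The first two terms are nonnegative because $R,X$ have nonnegative entries; the third is nonnegative because $M$ has non-positive entries and $\ell \le \ell^U \le (i^k)^2$ entrywise, by the definition \eqref{opt:vi_bounds_l} and the induction hypothesis. Hence $\nu_n \ge (v^{k+1}_n)^2$.

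\emph{Current bound.} Symmetrically, I would show $\ell_n \le (i^{k+1}_n)^2$ at every feasible point and take the supremum to obtain $\ell^U_n \le (i^{k+1}_n)^2$. Feasible points satisfy \eqref{eqn:DF_l}, so $\ell_n = (\nu_n)^{-1}\big|e_n^\top\big(\mathcal{D}(p+jq) - (\mathcal{D}-I)D_z\ell\big)\big|^2$, and I would bound the three ingredients: (i) $\nu_n \ge \nu^L_n \ge (v^k_n)^2$, so $(\nu_n)^{-1} \le (v^k_n)^{-2}$ (using positivity of $v^k$, see below); (ii) by the triangle inequality and Lemma~\ref{thm:abr_max}, $|e_n^\top\mathcal{D}(p+jq)| \le \bar{S}_n$, because $\bar{S}_n$ is precisely the maximum of $|e_n^\top\mathcal{D}(p+jq)|$ over $(p,q)$ satisfying \eqref{eqn:inverter_constraints}; (iii) since $\mathcal{D}-I$ has entries in $\{0,1\}$ and $r,x > {\bf 0}$, the complex number $e_n^\top(\mathcal{D}-I)D_z\ell$ lies in the closed first quadrant, so its modulus is nondecreasing in $\ell \ge {\bf 0}$, whence $|e_n^\top(\mathcal{D}-I)D_z\ell| \le |e_n^\top(\mathcal{D}-I)D_z(i^k)^2|$ using $\ell \le \ell^U \le (i^k)^2$. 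Combining these with the triangle inequality yields $\ell_n \le (v^k_n)^{-2}\big(\bar{S}_n + |e_n^\top(\mathcal{D}-I)D_z(i^k)^2|\big)^2 = (i^{k+1}_n)^2$, which closes the induction.

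\emph{Main obstacle.} Each step above is a short estimate once written in the right form, so the real difficulty is not computational but lies in two side conditions that the inductive hypotheses quietly assume: that $v^{k+1}$ in line~5 is \emph{real} (i.e.\ $\nu_0 {\bf 1} - 2R\hat p^L - 2X(\hat q^L+C) + M(i^k)^2 \ge {\bf 0}$) and \emph{positive} (so $(v^k_n)^{-2}$ in line~6 is well defined). Because $M$ is non-positive, both can fail if $i^k$ becomes too large, which is exactly the ``$v^k$ stays real-valued'' caveat flagged after the algorithm; I would handle this by asserting the proposition under the standing assumption that the iterates remain well defined (an online-checkable condition) rather than attempting to prove it. I would also make explicit in \eqref{opt:vi_bounds} that the loads are fixed at $\hat p^L,\hat q^L$ and that the relevant DistFlow solution is the operative (high-voltage) one, so that $\nu^L,\ell^U$ are well defined and $\nu^L > {\bf 0}$; with that in place, the argument uses only the sign patterns of $R,X,M$, the $\{0,1\}$ structure of $\mathcal{D}-I$, and the geometric characterization of $\bar{S}$ supplied by Lemma~\ref{thm:abr_max}.
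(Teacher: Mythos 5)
Your proof is correct and follows essentially the same route as the paper's: the same induction on $k$, the same use of the sign patterns of $R$, $X$, $M$ together with $\ell \le \ell^U \le (i^k)^2$, and the same identification of $\bar S_n$ as the worst-case value of $|e_n^\top\mathcal{D}(p+jq)|$ (which the paper obtains via Lemma~\ref{thm:Ds_max} with $\mathcal{I}=\mathcal{N}$ rather than from Lemma~\ref{thm:abr_max} directly). Your only organizational difference --- bounding at every feasible point of \eqref{eqn:DF} and \eqref{eqn:inverter_constraints} and then taking the infimum/supremum, instead of evaluating along an optimizer of \eqref{opt:vi_bounds} as the paper does --- is equivalent, and your remarks on the realness and positivity of the iterates flag a side condition that the paper's proof also leaves implicit.
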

In particular, Proposition~\ref{thm:fixed_point} specifies that $(\hat{\nu}^L, \hat{\ell}^U)$ as returned by Algorithm~\ref{alg:fixed_point} satisfy the desired bounding inequalities. Proposition~\ref{thm:fixed_point} also guarantees that, in case Algorithm~\ref{alg:fixed_point} does not converge (though we never encounter this in our extensive simulations), any premature solution from Algorithm~\ref{alg:fixed_point} still satisfies the bounding inequalities. In practice, Algorithm~\ref{alg:fixed_point} converges quickly even if the initialization conditions $v^0 \le (\nu^L)^{1/2}$ and $i^0 \ge (i^U)^{1/2}$ are not followed. As is standard in load flow analysis, we adopt a ``flat start'' of $v^0 = {\bf 1}$ and $i^0 = {\bf 0}$ in here. For convenience, the shorthand $\hat{v}^L := (\hat{\nu}^L)^{1/2}$ and $\hat{i}^U := (\hat{\ell}^U)^{1/2}$ will be used subsequently.

\subsection{Expressions Related to Line Current Upper Limit} \label{sec:l_UB}
This section develops two separate results -- (a) an upper bound of line current affinely dependent on ${\bf 1}_{\mathcal{I}}$ and (b) given any line current upper limit $\bar{\ell} \in \R^N$ a set of linear inequalities on ${\bf 1}_{\mathcal{I}}$ to ensure that $\ell(p,q) \le \bar{\ell}$ for all $(p,q) \in \mathcal{S}(\mathcal{I})$.

{\bf Affine current upper bound}: Let $\ell = \ell(p,q)$ from \eqref{eqn:DF_vi}. Then, $\ell^{1/2}$ is the vector of magnitude line currents and hence by Kirchhoff's current law
\begin{align} \label{eqn:In_UB}
    \sqrt{\ell_n} = \left| \sum\limits_{m \in d(n)} i'_m \right| \le \!\!\! \sum\limits_{m \in d(n)} \!\!\! \left| i'_m \right| \le \!\!\! \sum\limits_{m \in d(n)} \!\!\! \frac{|p_m + j q_m|}{\hat{v}^L_m}, \; \forall n \in \mathcal{L}
\end{align}
where $i'_m$ is the net current injection at bus $m$, $d(n)$ is the set of ($n$-inclusive) descendants of $n$ defined in Section~\ref{subsec:distribution_system_notations} and $\hat{v}^L$ is the voltage lower bound by Algorithm~\ref{alg:fixed_point} (so that $\hat{\nu}^L \le \nu^L$ in \eqref{opt:vi_bounds_V}). In \eqref{eqn:In_UB}, the last inequality holds because $|p_m+j q_m| = |v_m| |i'_m|$ and hence $|p_m+j q_m| \ge \hat{v}^L_m |i'_m|$. Given power injection $(p,q)$, \eqref{eqn:In_UB} provides an upper bound on $\ell(p,q)^{1/2}$. Let $\mathcal{I} \subseteq \mathcal{N}$ denote the set of buses with software update failure implying that $(p,q) \in \mathcal{S}(\mathcal{I})$ in \eqref{eqn:SI}. Then, the maximum line current $\sqrt{\ell_n}$ subject to $\mathcal{I}$ is upper bounded by
\begin{align} \label{eqn:iub_KCL}
\begin{split}
    &\max\limits_{(p,q) \in \mathcal{S}(\mathcal{I})} \sum\limits_{m \in d(n)} \frac{|p_m + j q_m|}{\hat{v}^L_m} \\
    = &\sum\limits_{m \in d(n) \setminus \mathcal{I}} \frac{|(\hat{p}^G_m - \hat{p}^L_m) + j (\hat{q}^G_m - \hat{q}^L_m)|}{\hat{v}^L_m} + \sum\limits_{m \in d(n) \cap \mathcal{I}} \frac{{|s^\star|}_m}{\hat{v}^L_m}
\end{split}
\end{align}
where
\begin{align} \label{eqn:sm_opt}
\begin{split}
    {|s^\star|}_m = &\max\limits_{p^G_m, q^G_m} \quad |(\hat{p}^L_m + j \hat{q}^L_m) - (p^G_m + j q^G_m)| \\
    & \text{subject to} \quad p^G_m \ge 0, \;\; |p^G_m + j q^G_m| \le C_m
\end{split}
\end{align}
By Lemma~\ref{thm:abr_max} in Appendix A, ${|s^\star|}_m$ in \eqref{eqn:sm_opt} is equal to
\begin{align} \label{eqn:sm}
\begin{split}
    {|s^\star|}_m = &\max\left\{|\hat{p}^L_m + j (\hat{q}^L_m + C_m)|, |\hat{p}^L_m + j (\hat{q}^L_m - C_m)|, \right.\\
    &\left. |\hat{p}^L_m + j \hat{q}^L_m - C_m (\cos(\theta_{sm}) + j \sin(\theta_{sm})| \right\}
\end{split}
\end{align}
where $\theta_{sm} = \tan^{-1}(\hat{q}^L_m/\hat{p}^L_m)$ if at least one of $\hat{p}^L_m$ and $\hat{q}^L_m$ is nonzero, and $\theta_{sm} = 0$ otherwise. Define
\begin{align} \label{eqn:sm_snorm}
\begin{split}
    & \text{$|s^\star| \in \R^N$ with entries specified in \eqref{eqn:sm}} \\
    & |s^{\text{nom}}| := \left| (\hat{p}^G - \hat{p}^L) + j (\hat{q}^G - \hat{q}^L) \right|
\end{split}
\end{align}
Then, the current upper bound described by \eqref{eqn:iub_KCL}, \eqref{eqn:sm} and \eqref{eqn:sm_snorm} can be written compactly as
\begin{align} \label{eqn:iub_lin}
\begin{split}
    \ell^{1/2} \! &\le \mathcal{D} \Big((\hat{v}^L)^{-1} \! \odot \! |s^{\text{nom}}| \! + \! \text{diag}\big((\hat{v}^L)^{-1} \! \odot (|s^\star| \! - \! |s^{\text{nom}}|)\big) {\bf 1}_{\mathcal{I}}\Big) \\
    &:= I^{\text{iub1}} + W^{\text{iub1}} {\bf 1}_{\mathcal{I}}
\end{split}
\end{align}
In summary, if $\mathcal{I} \subseteq \mathcal{N}$ denotes the set of bus(es) with software update(s), then despite the worst-case update failure, the line current vector is upper bounded by the expression in \eqref{eqn:iub_lin}.

{\bf Linear inequalities guaranteeing current upper limit}: For any net power injections $(p,q)$, \eqref{eqn:DF_l} specifies that
\begin{align*}
    \sqrt{\ell_n} &= (\nu_n)^{-1/2} e_n^\top \left|\mathcal{D} (p + jq) - (\mathcal{D}-I) D_z \ell \right| \\
    &\le (\hat{v}^L_n)^{-1} e_n^\top \big( |\mathcal{D}(p \! + \! jq)| \! + \! |(\mathcal{D} \! - \! I) D_z (\hat{i}^U \! \odot \ell^{1/2}) | \big)
\end{align*}
where $\mathcal{D}$ and $D_z$ are defined in Section~\ref{subsec:distribution_system_notations}, and $\hat{v}^L = (\hat{\nu}^L)^{1/2}$, $\hat{i}^U = (\hat{\ell}^U)^{1/2}$ from Algorithm \ref{alg:fixed_point}. The inequality above is due to the following four facts: (a) $\hat{v}^L \le (\nu(p,q))^{1/2}$, (b) triangular inequality, (c) ${\bf 0} \le \ell(p,q) = \ell^{1/2} \odot \ell^{1/2} \le \hat{i}^U \odot \ell^{1/2}$ and (d) entries of $(\mathcal{D} - I)$, $D_r$, $D_x$ are nonnegative. Let $\mathcal{I} \subseteq \mathcal{N}$ denote the set of buses with software update failure implying that $(p,q) \in \mathcal{S}(\mathcal{I})$ in \eqref{eqn:SI}. Then, for any $n \in \mathcal{N}$
\begin{align} \label{eqn:iub_inexact1}
\begin{split}
    \max\limits_{(p,q) \in \mathcal{S}(\mathcal{I})} \sqrt{\ell_n} \le & \; (\hat{v}^L_n)^{-1} \max\limits_{(p,q) \in \mathcal{S}(\mathcal{I})} e_n^\top \left|\mathcal{D}(p + jq)\right| \\
    & + (\hat{v}^L_n)^{-1} e_n^\top |(\mathcal{D} - I) D_z (\hat{i}^U \odot \ell^{1/2}) |
\end{split}
\end{align}
The two terms in the right-hand-side of \eqref{eqn:iub_inexact1} can be analyzed separately. By Lemma~\ref{thm:Ds_max} in Appendix A the first term is $(\hat{v}^L_n)^{-1} \max\{|a + j (b+r)|, |a + j(b - r)|, |a+jb - r e^{j \theta}|\}$ with $a$, $b$, $r$, $\theta$ defined in \eqref{eqn:DS_max_parameters} in Lemma~\ref{thm:Ds_max}. Due to $\theta$, the expression is a nontrivial function of ${\bf 1}_{\mathcal{I}}$ and hence we consider its simplifying upper bound $(\hat{v}^L_n)^{-1}(|a + jb| + r)$, which is written as
\begin{align} \label{eqn:iub_inexact_max}
\begin{split}
    & (\hat{v}^L_n)^{-1} \Big|e_n^\top \mathcal{D} (\hat{p}^L \! - \! \hat{p}^G \! + \! j( \hat{q}^L \! - \! \hat{q}^G)) \! + \! e_n^\top \mathcal{D} D_{(\hat{p}^G+j \hat{q}^G)} {\bf 1}_{\mathcal{I}}\Big| \\
    &  + (\hat{v}^L_n)^{-1} e_n^\top \mathcal{D} D_C {\bf 1}_{\mathcal{I}}
\end{split}
\end{align}
Because of the absolute value, \eqref{eqn:iub_inexact_max} is not affine with respect to ${\bf 1}_{\mathcal{I}}$. Thus, it is further upper bounded by polygonal approximation. In particular, for any $x+jy \in \mathbb{C}$ it holds that
\begin{align} \label{eqn:polygonal_approximation}
\begin{split}
    & |x+jy| \\
    \le &\frac{1}{\cos(\pi/N')} \max\limits_{1 \le k \le N'} \big\{\cos(\frac{(2k-1)\pi}{N'}) x \!+\! \sin(\frac{(2k-1)\pi}{N'}) y \big\}
\end{split}
\end{align}
for any integer $N' \ge 4$. For example, if $N' = 4$ then
\begin{align*}
    |x+jy| \le \max\{x+y, x-y, -x+y, -x-y\} = |x| + |y|
\end{align*}
Therefore,
\begin{align} \label{eqn:iub_inexact_max_pa}
\begin{split}
    & \text{\eqref{eqn:iub_inexact_max}} \\
    \le & \; (\hat{v}^L_n)^{-1} \frac{1}{\cos(\pi/N')} \! \max\limits_{1 \le k \le N'} \! \Big\{ \cos(\frac{(2k\!-\!1)\pi}{N'}) \big(e_n^\top \mathcal{D} (\hat{p}^L \!-\! \hat{p}^G) \\
    &+ e_n^\top \mathcal{D} D_{\hat{p}^G} {\bf 1}_{\mathcal{I}}\big) +  \sin(\frac{(2k-1)\pi}{N'}) \big(e_n^\top \mathcal{D} (\hat{q}^L - \hat{q}^G) \\
    & + e_n^\top \mathcal{D} D_{\hat{q}^G} {\bf 1}_{\mathcal{I}}\big) \Big\} + (\hat{v}^L_n)^{-1} e_n^\top \mathcal{D} D_C {\bf 1}_{\mathcal{I}}
\end{split}
\end{align}
Now we consider the second term in the right-hand-side of \eqref{eqn:iub_inexact1}. Using \eqref{eqn:iub_lin} and the fact that all entries of $(\mathcal{D}-I)$, $D_r$, $D_x$ are nonnegative, the term can be upper bounded by
\begin{align} \label{eqn:iub_inexact_loss}
    (\hat{v}^L_n)^{-1} \Big|e_n^\top (\mathcal{D} - I) \text{diag}(z \odot \hat{i}^U) (I^{\text{iub1}} + W^{\text{iub1}} {\bf 1}_{\mathcal{I}}) \Big|
\end{align}
Using polygonal approximation in \eqref{eqn:polygonal_approximation} with integer $N'' \ge 4$, it holds that
\begin{align} \label{eqn:iub_inexact_loss_pa}
    \begin{split}
        & \text{\eqref{eqn:iub_inexact_loss}} \\
        \le & \; (\hat{v}^L_n)^{-1} \frac{1}{\cos\left(\pi/N''\right)} \max\limits_{1 \le k \le N''} \Big\{ \cos(\frac{(2k\!-\!1)\pi}{N''}) e_n^\top (\mathcal{D} - I) \\
        & \times \text{diag}(r \odot \hat{i}^U) (I^{\text{iub1}} + W^{\text{iub1}} {\bf 1}_{\mathcal{I}}) + \sin(\frac{(2k-1)\pi}{N''}) \\
        & \times e_n^\top (\mathcal{D} - I) \text{diag}(x \odot \hat{i}^U) (I^{\text{iub1}} + W^{\text{iub1}} {\bf 1}_{\mathcal{I}}) \Big\}
    \end{split}
\end{align}
Hence, the sum of the RHS of \eqref{eqn:iub_inexact_max_pa} and \eqref{eqn:iub_inexact_loss_pa} is an upper bound of $\sqrt{\ell_n(p,q)}$ for all $(p,q) \in \mathcal{S}(\mathcal{I})$. Thus, $\forall (p,q) \in \mathcal{S}(\mathcal{I})$
\begin{align} \label{eqn:iub_con_n}
    \text{RHS \eqref{eqn:iub_inexact_max_pa}} + \text{RHS \eqref{eqn:iub_inexact_loss_pa}} \le (\bar{\ell}_n)^{1/2}
    \! \implies \! \sqrt{\ell_n} \le (\bar{\ell}_n)^{1/2}
\end{align}
Utilizing the fact that for any $x \in \R^{N'}$, $y \in \R^{N''}$, $z \in \R$
\begin{align*}
    \max\limits_i x_i + \max\limits_j y_j \le z \iff {\bf 1}_{N''} \otimes x + y \otimes {\bf 1}_{N'} \le z {\bf 1}_{N' N''}
\end{align*}
with ${\bf 1}_{N'}$, ${\bf 1}_{N''}$, ${\bf 1}_{N' N''}$ denoting all-one vectors of appropriate dimensions, the condition in \eqref{eqn:iub_con_n} can be ``vectorized'' as a set of linear constraints with respect to ${\bf 1}_{\mathcal{I}}$ to enforce $\sqrt{\ell_n} \le (\bar{\ell}_n)^{1/2}$ for a specific $n \in \mathcal{N}$. By ``vectorizing'' the conditions in \eqref{eqn:iub_con_n} for all $n$, the desired linear constraints can be summarized as
\begin{equation} \label{eqn:iub_con}
    I^{\text{iub2}} + W^{\text{iub2}} {\bf 1}_{\mathcal{I}} \le {\bf 1}_{N''} \otimes ({\bf 1}_{N'} \otimes (\bar{\ell})^{1/2})
\end{equation}
where
\begin{align} \label{eqn:WI_iub2}
    \begin{split}
        I^{\text{iub2}} &= {\bf 1}_{N''} \otimes \Big(c' \otimes \big((\hat{v}^L)^{-1} \odot \mathcal{D} (\hat{p}^L - \hat{p}^G) \big) \\ 
        &+ s' \otimes \big((\hat{v}^L)^{-1} \odot \mathcal{D} (\hat{q}^L - \hat{q}^G) \big) \Big) \\
        &+ c'' \! \otimes \! {\bf 1}_{N'} \otimes \big({(\hat{v}^L)}^{-1} \odot (\mathcal{D} \! - \! I) \text{diag}(r \odot \hat{i}^U) I^{\text{iub1}} \big) \\
        &+ s'' \! \otimes \! {\bf 1}_{N'} \otimes \big({(\hat{v}^L)}^{-1} \odot (\mathcal{D} \! - \! I) \text{diag}(x \odot \hat{i}^U) I^{\text{iub1}} \big) \\
        W^{\text{iub2}} &= {\bf 1}_{N''} \otimes \Big(c' \otimes D_{\hat{v}^L}^{-1} \mathcal{D} D_{\hat{p}^G} + s' \otimes D_{\hat{v}^L}^{-1} \mathcal{D} D_{\hat{q}^G} \\
        &+ {\bf 1}_{N'} \otimes D_{\hat{v}^L}^{-1} \mathcal{D} D_C \Big)  \\
        &+ c'' \otimes {\bf 1}_{N'} \otimes \big(D_{\hat{v}^L}^{-1} (\mathcal{D} - I) \text{diag}(r \odot \hat{i}^U) W^{\text{iub1}}\big) \\
        &+ s'' \otimes {\bf 1}_{N'} \otimes \big(D_{\hat{v}^L}^{-1} (\mathcal{D} - I) \text{diag}(x \odot \hat{i}^U) W^{\text{iub1}}\big)
    \end{split}
\end{align}
and $c', s' \in \R^{N'}$, $c'', s'' \in \R^{N''}$ with entries defined by
\begin{align} \label{eqn:cs_vectors}
    \begin{split}
        c'_k &= \frac{\cos((2k-1) \pi/N')}{\cos(\pi/N')}, \quad \forall k \in \{1, \ldots, N'\} \\
        s'_k &= \frac{\sin((2k-1) \pi/N')}{\cos(\pi/N')}, \quad \forall k \in \{1, \ldots, N'\} \\
        c''_k &= \frac{\cos((2k-1) \pi/N'')}{\cos(\pi/N'')}, \quad \forall k \in \{1, \ldots, N''\} \\
        s''_k &= \frac{\sin((2k-1) \pi/N'')}{\cos(\pi/N'')}, \quad \forall k \in \{1, \ldots, N''\}
    \end{split}
\end{align}
In summary, if $\mathcal{I} \subseteq \mathcal{N}$ denotes the set of bus(es) with software update(s), then despite all possible worst-case update failure, if \eqref{eqn:iub_con} holds then the (squared) line currents are no greater than the upper limits specified by $\bar{\ell}$.

We note that \eqref{eqn:iub_lin} can also be used to construct line current upper limit constraints by imposing $I^{\text{iub1}} + W^{\text{iub1}} {\bf 1}_{\mathcal{I}} \le (\bar{\ell})^{1/2}$. This is in fact adopted by our previous work in \cite{sou2022resilient}, requiring $N$ instead of $N N' N''$ linear constraints in \eqref{eqn:iub_con} used in this paper. However, \eqref{eqn:iub_lin} is derived by replacing the current phasors with their magnitudes in \eqref{eqn:In_UB}. This tends to be conservative especially when there is significant reverse power flows in the system. Our numerical experiment (not shown in this paper) indeed indicates that \eqref{eqn:iub_con} are generally less conservative than \eqref{eqn:iub_lin}. In addition, the total time to construct the constraints in \eqref{eqn:iub_con} and to solve the corresponding rollout problem remains modest (see Section~\ref{sec:demonstration} for more detail). This motivates our proposed current upper limit constraints using \eqref{eqn:iub_con}. Nevertheless, \eqref{eqn:iub_lin} retains its distinct advantage that it is a upper bound of $(\ell)^{1/2}$ linearly dependent on ${\bf 1}_{\mathcal{I}}$. Usually for $\mathcal{I} \subset \mathcal{N}$, $I^{\text{iub1}} + W^{\text{iub1}} {\bf 1}_{\mathcal{I}} \le \hat{i}^U$ because $\hat{i}^U$ assumes update failure at all buses in $\mathcal{N}$. This observation is utilized to obtain a less conservative second term in \eqref{eqn:iub_inexact1} (as in \eqref{eqn:iub_inexact_loss}).

\subsection{Voltage Upper Limit Constraint} \label{sec:v_UB}
Note that $M \ell \le 0$ in \eqref{eqn:DF_nu} because all entries of $M$ in \eqref{eqn:RXM} are non-positive. Therefore, for any $(p,q)$, $\nu(p,q) \le \nu_0 {\bf 1} + 2 R p + 2 X q$. Let $\mathcal{I} \subseteq \mathcal{N}$ denote the set of buses with software updates\ implying that $(p,q) \in \mathcal{S}(\mathcal{I})$ in \eqref{eqn:SI}. Then, for any $n \in \mathcal{N}$,
\begin{align} \label{eqn:vub0}
\begin{split}
    & \max\limits_{(p,q) \in \mathcal{S}(\mathcal{I})} \nu_n \\
    \le & \max\limits_{(p,q) \in \mathcal{S}(\mathcal{I})} \big\{ \nu_0 + 2 e_n^\top R p + 2 e_n^\top X q \big\} \\
    = & \; e_n^\top \Big( \nu_0 {\bf 1} + 2 R (\hat{p}^G - \hat{p}^L) + 2 X (\hat{q}^G - \hat{q}^L) \Big) \\
    & + \sum\limits_{i \in \mathcal{I}} \max\limits_{p^G_i \ge 0, \; q^G_i} \Big\{ 2 R_{ni} p^G_i + 2 X_{ni} q^G_i \mid |p^G_i + j q^G_i| \le C_i \Big\} \\
    & - \sum\limits_{i \in \mathcal{I}} (2 R_{ni} \hat{p}^G_i + 2 X_{ni} \hat{q}^G_i)
\end{split}
\end{align}
where $R_{ni}$ and $X_{ni}$ denote the $(n,i)$ entry of $R$ and $X$, respectively. Since $R_{ni} \ge 0$, $X_{ni} \ge 0$, it holds in \eqref{eqn:vub0} that
\begin{align} \label{eqn:vub_max}
    \begin{split}
        & \max\limits_{p^G_i \ge 0, \; q^G_i} \Big\{ 2 R_{ni} p^G_i + 2 X_{ni} q^G_i \mid |p^G_i + j q^G_i| \le C_i \Big\} \\
        = & \; 2 C_i \sqrt{R_{ni}^2 + X_{ni}^2} \\
        := & \; 2 C_i Z_{ni}
    \end{split}
\end{align}
with
\begin{align} \label{eqn:Z}
    Z \in \R^{N \times N}, \;\; Z_{ni} := \sqrt{R_{ni}^2 + X_{ni}^2}, \;\; \forall (n,i)
\end{align}
Note that in \eqref{eqn:vub0} it holds (for the last term) that
\begin{equation} \label{eqn:RpXq}
\sum\limits_{i \in \mathcal{I}} (2 R_{ni} \hat{p}^G_i + 2 X_{ni} \hat{q}^G_i) = e_n^\top (2 R D_{\hat{p}^G} + 2 X D_{\hat{q}^G}) {\bf 1}_{\mathcal{I}}
\end{equation}
Thus, \eqref{eqn:vub_max} and \eqref{eqn:RpXq} imply that the inequalities in \eqref{eqn:vub0} can be assembled for all $n$ as
\begin{align} \label{eqn:vub}
    \nu(p,q) \le \hat{\nu}^{\text{nom}} + W^{\text{vub}} {\bf 1}_{\mathcal{I}}, \quad \forall (p,q) \in \mathcal{S}(\mathcal{I})
\end{align}
where
\begin{align} \label{eqn:vub_vW}
    \begin{split}
        \hat{\nu}^{\text{nom}} &:= \nu_0 {\bf 1} + 2 R (\hat{p}^G - \hat{p}^L) + 2 X (\hat{q}^G - \hat{q}^L) \\
        W^{\text{vub}} &:= 2 Z D_C - (2 R D_{\hat{p}^G} + 2 X D_{\hat{q}^G})
    \end{split}
\end{align}
In summary, if $\mathcal{I}$ denotes the set of buses with software updates the maximum squared voltages are upper bounded by $\hat{\nu}^{\text{nom}} + W^{\text{vub}} {\bf 1}_{\mathcal{I}}$ as specified in \eqref{eqn:vub}. Therefore, if the constraints 
\begin{equation} \label{eqn:vub_con}
\hat{\nu}^{\text{nom}} + W^{\text{vub}} {\bf 1}_{\mathcal{I}} \le \bar{\nu}
\end{equation}
are imposed on ${\bf 1}_{\mathcal{I}}$, then the worst-case possible squared voltages due to software update failure associated with inverters at buses in $\mathcal{I}$ cannot exceed $\bar{\nu}$.

We remark that $\hat{\nu}^{\text{nom}}$ in \eqref{eqn:vub_vW} is the vector of nominal squared voltage (i.e., no software update failure) of the \emph{linearized} DistFlow equations (i.e., \eqref{eqn:DF_nu} with the loss term $M \ell$ ignored).


\subsection{Voltage Lower Limit Constraint} \label{sec:v_LB}
Let $\mathcal{I} \subseteq \mathcal{N}$ denotes the set of bus(es) with software update(s) implying that $(p,q) \in \mathcal{S}(\mathcal{I})$. Then, for any $n \in \mathcal{N}$, \eqref{eqn:DF_nu} specifies that
\begin{align*}
\begin{split}
    \nu_n &= \nu_0 + 2 e_n^\top Rp + 2 e_n^\top X q + e_n^\top M \ell \\
    &\ge \nu_0 + e_n^\top \Big( 2 Rp + 2 X q + M \big(\hat{i}^U \odot (I^{\text{iub1}} + W^{\text{iub1}} {\bf 1}_{\mathcal{I}}) \big) \Big)
\end{split}
\end{align*}
where the inequality holds because of the following four facts: (a) all entries of $M$ are non-positive, (b) $\ell = \ell^{1/2} \odot \ell^{1/2}$, (c) for all $(p,q)$, it holds that $\ell^{1/2} \le \hat{i}^U$ by \eqref{opt:vi_bounds_l}, (d) for all $(p,q) \in \mathcal{S}(\mathcal{I})$, it holds that $\ell^{1/2} \le I^{\text{iub1}} + W^{\text{iub1}} {\bf 1}_{\mathcal{I}}$. Therefore, the minimum $\nu_n$ due to $(p,q) \in \mathcal{S}(\mathcal{I})$ satisfies
\begin{align} \label{eqn:vlb1}
    \begin{split}
        & \min\limits_{(p,q) \in \mathcal{S}(\mathcal{I})} \nu_n \\ 
        \ge & \; \nu_0 + \min\limits_{(p,q) \in \mathcal{S}(\mathcal{I})} \big\{ 2 e_n^\top R p + 2 e_n^\top X q \big\} \\
        & + e_n^\top M \big(\hat{i}^U \odot (I^{\text{iub1}} + W^{\text{iub1}} {\bf 1}_{\mathcal{I}}) \big) \\
        = & \; e_n^\top \Big( \nu_0 {\bf 1}_{\mathcal{I}} + 2 R (\hat{p}^G - \hat{p}^L) + 2 X (\hat{q}^G - \hat{q}^L) \Big) \\
        &+ \sum\limits_{i \in \mathcal{I}} \min\limits_{p^G_i \ge 0, q^G_i} \Big\{2 R_{ni} p^G_i + 2 X_{ni} q^G_i \mid |p^G_i + j q^G_i| \le C_i \Big\} \\
        &- \sum\limits_{i \in \mathcal{I}} (2 R_{ni} \hat{p}^G_i \! + \! 2 X_{ni} \hat{q}^G_i) \! + \! e_n^\top M \big(\hat{i}^U \! \odot \!(I^{\text{iub1}} + W^{\text{iub1}} {\bf 1}_{\mathcal{I}}) \big)
    \end{split}
\end{align}
Since $R_{ni} \ge 0$, $X_{ni} \ge 0$, in \eqref{eqn:vlb1} it holds that
\begin{align} \label{eqn:vlb_min}
    \begin{split}
        & \min\limits_{p^G_i \ge 0, q^G_i} \Big\{2 R_{ni} p^G_i + 2 X_{ni} q^G_i \mid |p^G_i + j q^G_i| \le C_i \Big\} \\
        = & - \sum\limits_{i \in \mathcal{I}} 2 X_{ni} C_i \\
        = & - 2 e_n^\top X D_C {\bf 1}_{\mathcal{I}}
    \end{split}
\end{align}
Thus, \eqref{eqn:RpXq}, \eqref{eqn:vub_vW} and \eqref{eqn:vlb_min} imply that the inequalities in \eqref{eqn:vlb1} can be assembled for all $n$ as
\begin{equation} \label{eqn:vlb}
    \nu(p,q) \ge \hat{\nu}^{\text{nom}} + M (\hat{i}^U \odot I^{\text{iub1}}) - W^{\text{vlb}} {\bf 1}_{\mathcal{I}}, \;\; \forall (p,q) \in \mathcal{S}(\mathcal{I})
\end{equation}
where
\begin{equation} \label{eqn:vlb_W}
    W^{\text{vlb}} := 2 X D_C + 2 R D_{\hat{p}^G} + 2 X D_{\hat{q}^G} - M D_{\hat{i}^U} W^{\text{iub1}}
\end{equation}
In summary, if $\mathcal{I}$ denotes the set of buses with software updates then the minimum squared voltages are lower bounded by $\hat{\nu}^{\text{nom}} + M (\hat{i}^U \odot I^{\text{iub1}}) - W^{\text{vlb}} {\bf 1}_{\mathcal{I}}$ as in \eqref{eqn:vlb}. Therefore, if the constraints
\begin{equation} \label{eqn:vlb_con}
\hat{\nu}^{\text{nom}} + M (\hat{i}^U \odot I^{\text{iub1}}) - W^{\text{vlb}} {\bf 1}_{\mathcal{I}} \ge \underline{\nu}
\end{equation}
are imposed on ${\bf 1}_{\mathcal{I}}$, then the worst-case squared voltages due to software update failure associated with inverters at buses in $\mathcal{I}$ cannot be lower than $\underline{\nu}$.

\subsection{Software Update Rollout Problem Formulation} \label{sec:formulation}
The decision variables of the rollout problem are the software update instants $t_i \ge 0$ for bus $i \in \mathcal{N}$. Since the updates should be scheduled to cause the least interruption to normal operation, the makespan (i.e., the time when the first update starts to the time when the last update finishes) should be minimized. Furthermore, the safety constraints should be satisfied. For given $t_1, t_2, \ldots, t_N$ and $t$ we define
\begin{equation} \label{eqn:It}
    \mathcal{I}(t) := \{i \in \mathcal{N} \mid t \in \begin{bmatrix} t_i, t_i + \Delta \tau \end{bmatrix} \}
\end{equation}
as the set of buses such that time interval $\begin{bmatrix} t_i, t_i + \Delta \tau \end{bmatrix}$ contains $t$. Recall that $\Delta \tau$ is the fault clearing time, so $\mathcal{I}(t)$ is the set of buses whose adverse effect of software update failure can be seen at time $t$ (for given $t_1, t_2, \ldots$). Then, to maintain the grid code in \eqref{eqn:lv_bounds} in the worst-case scenario (all scheduled updates fail and the corresponding nodes suffer the worst possible power injections), constraints \eqref{eqn:iub_con}, \eqref{eqn:vub_con} and \eqref{eqn:vlb_con} should hold for $\mathcal{I}(t)$ at all $t$. Hence, the rollout problem is summarized as
\begin{align} \label{opt:rollout}
    \begin{split}
        \underset{t_i}{\text{minimize}} & \quad \underset{i \in \mathcal{N}}{\max} \; t_i \\
        \text{subject to} & \quad {\bf H} {\bf 1}_{\mathcal{I}(t)} \le {\bf b}, \quad \forall t \\
        & \quad \text{$\mathcal{I}(t)$ satisfies \eqref{eqn:It}}
    \end{split}
\end{align}
where
\begin{equation} \label{eqn:Hb}
    \begin{split}
        \!\! {\bf H} := \begin{bmatrix}
        W^{\text{iub2}} \\ W^{\text{vub}} \\ W^{\text{vlb}}
        \end{bmatrix}, \;
        \mathbf{b} := \begin{bmatrix}
        {\bf 1}_{N''} \otimes ({\bf 1}_{N'} \otimes (\bar{\ell})^{1/2}) - I^{\text{iub2}} \\ \overline{\nu} - \hat{\nu}^{\text{nom}} \\ \hat{\nu}^{\text{nom}} + M (\hat{i}^U \odot I^{\text{iub1}}) - \underline{\nu}
        \end{bmatrix}
    \end{split}
\end{equation}
Due to the indicator function $\mathcal{I}(t)$, problem \eqref{opt:rollout} is nonlinear and non-convex. However, as shown in \cite[Proposition 1]{de2020minimum}, $t_i$ in \eqref{opt:rollout} can in fact be restricted to nonnegative integer multiples of $\Delta \tau$. Then, \eqref{opt:rollout} can be reinterpreted as follows. The time axis can be divided into intervals called time slots of the form $[k \Delta \tau, (k+1) \Delta \tau]$ with $0 \le k \le N-1$. The decision to make is the assignment of updates to the time slots, so that each update is assigned to exactly one time slot. For the shortest update schedule, the number of time slots used is minimized. Further, instead of imposing the constraint in \eqref{opt:rollout} for all $t$, it suffices to consider only integer multiples of $\Delta \tau$. By defining the ``discrete-time'' version of $\mathcal{I}$ in \eqref{eqn:It} with time slot index starting from $j = 1$
\begin{displaymath}
\mathcal{I}_j := \mathcal{I}\big((j-1) \Delta \tau\big), \quad j = 1, 2, \ldots, N
\end{displaymath}
problem \eqref{opt:rollout} can be reformulated as follows:
\begin{align} \label{opt:bin_packing}
    \begin{split}
        \text{minimize} & \quad T \\
        \text{subject to} & \quad \text{$\mathcal{I}_1, \mathcal{I}_2, \ldots, \mathcal{I}_T$ partition $\mathcal{N}$} \\
        & \quad {\bf H} {\bf 1}_{\mathcal{I}_j} \le {\bf b}, \quad j \in \{1, \ldots, T\}
    \end{split}
\end{align}
with ${\bf H}$ and ${\bf b}$ defined in \eqref{eqn:Hb}. Problem~\eqref{opt:bin_packing} is known as the \emph{vector bin packing problem}. It can be modeled as an integer program (e.g., \cite{de2020minimum}) or approximately solved using greedy algorithms to be detailed in the next section.

\section{Solving Software Update Rollout Problem} \label{sec:solution}
To quickly solve large instances of \eqref{opt:bin_packing}, we adopt the best-fit-decreasing greedy algorithm. First we check if it is possible to let each time slot hold exactly one update, since the instance is feasible if and only if this is possible. If feasible, the updates are sorted by their ``sizes'' specified by the row vector ${\bf 1}^\top {\bf H}$ (large to small). Then, the sorted updates are considered one-by-one. For each update, we first try to assign it to the time slot with the minimum ``sum-up residual capacity'' (to be defined in Algorithm~\ref{alg:BFD}), among the used slots with some update already assigned. Only if no used slot can hold the update, will a new slot be appended to the schedule.

Algorithm~\ref{alg:BFD} summaries our customized best-fit-decreasing algorithm.
\begin{algorithm}
\caption{Best-fit-decreasing algorithm to solve \eqref{opt:bin_packing}}
\label{alg:BFD}
\begin{algorithmic}[1]
\Require ${\bf H} \in \R^{K \times N}$, ${\bf b} \in \R^K$
\Ensure ${\bf x} \in \{0,1\}^{N \times N}$, ${\bf y} \in \{0,1\}^N$ if instance is feasible; ${\bf x}_{ij} = 1$ iff update $i$ is assigned to time slot $j$, and ${\bf y}_j = 1$ iff time slot $j$ is needed in the rollout schedule
\If{${\bf b} \ngeq {\bf H}(:,i)$ for any $i \in \mathcal{N}$} \Comment{feasibility check}
\State report infeasibility, QUIT \EndIf
\State Define ${\bf s} = {\bf 1}^\top {\bf H}$ \Comment{``size'' of each update}
\State Define update indices ${\bf u}_1, \ldots, {\bf u}_N : {\bf s}({\bf u}_1) \ge {\bf s}({\bf u}_2) \ge \ldots$

\State Initialize ${\bf x} \gets {\bf 0}_{N \times N}$, ${\bf y} \gets {\bf 0}_N$
\State Initialize $N_y \gets 0$ \Comment{number of time slots used}
\State Initialize ${\bf C} \gets {\bf 0}_{K \times N}$ (residual capacity vectors for all time slots), ${\bf c} \gets {\bf 1}^\top {\bf b} $ (sum-up residual capacity)
\State Initialize time slot indices ${\bf s}_1 \gets 1$, \ldots, ${\bf s}_N \gets N$

\For{$i = 1, \ldots, N$} \Comment{loop over sorted updates}
\For{$j = 1, \ldots, N$} \Comment{loop over sorted time slots}
\If{${\bf C}(:,{\bf s}_j) \ge {\bf H}(:,{\bf u}_i)$} \Comment{check used slot}
   \State $s^* \gets {\bf s}_j$, quit for-loop over $j$
\EndIf
\If{$j > N_y$} \Comment{open new slot}
\State $s^* \gets j$, $N_y +\!+$, ${\bf C}(:,N_y) \gets {\bf b}$, ${\bf y}(N_y) \gets 1$
\State quit for-loop over $j$
\EndIf
\EndFor

\State Update ${\bf x}({\bf u}_i, s^*) \gets 1$
\State Update ${\bf C}(:,s^*) \minuseq H(:,{\bf u}_i)$, ${\bf c}(s^*) \minuseq {\bf s}({\bf u}_i)$

\While{$j \ge 2$} \Comment{update used time slots ordering}
\If{${\bf c}({\bf s}_j) < {\bf c}({\bf s}_{j-1})$}
\State Swap ${\bf s}_j$ with ${\bf s}_{j-1}$
\EndIf
\State $j -\!-$
\EndWhile
\EndFor
\end{algorithmic}
\end{algorithm}
Assume that the instance is feasible. The assignment ${\bf C}(:,N_y) \gets {\bf b}$ in line 16 specifies that whenever a new time slot is used, its capacity vector is initialized to be ${\bf b}$. Whenever an update is assigned to the time slot, the modification ${\bf C}(:,s^*) \minuseq H(:,{\bf u}_i)$ in line 21 ensures that the residual capacity vector of the slot is properly deducted. Then the condition ${\bf C}(:,{\bf s}_j) \ge {\bf H}(:,{\bf u}_i)$ in line 12 specifies that the time slot can accept an update only if its residual capacity vector remains nonnegative after the assignment. Together, the assignment returned by the algorithm satisfies the second constraints in \eqref{opt:bin_packing}. In addition, the for-loop over $i$ implies that each update must be assigned to exactly one time slot (a new slot if necessary). Hence, the first constraint of \eqref{opt:bin_packing} is satisfied by the algorithm output. Therefore, the algorithm returns a feasible rollout whenever the instance is feasible. Though the returned rollout is not necessarily optimal, it typically requires very few time slots (see Table~\ref{tab:statistics} in Section~\ref{sec:demonstration} for detail).

The main loop over $i$ is executed $N$ times. Each iteration of the inner loop over $j$ requires $O(K)$ operations. The remaining steps in an iteration of the main loop require $O(N+K)$ operations. Hence, the time-complexity of Algorithm~\ref{alg:BFD} is $O(N^2 K)$. In Section~\ref{sec:demonstration}, we demonstrate that Algorithm~\ref{alg:BFD} as stated can be executed very efficiently even in MATLAB. For example, running Algorithm~\ref{alg:BFD} for the 10,476-bus instance from \cite{REDS_KA} requires less than 2 seconds. The overall software rollout scheduling procedure is summarized in Algorithm~\ref{alg:procedure_summary}.
\begin{algorithm}
    \caption{Software update rollout scheduling procedure}
    \label{alg:procedure_summary}
    \begin{algorithmic}[1]
        \Require (a) DistFlow equation \eqref{eqn:DF}, (b) load $(\hat{p}^L, \hat{q}^L)$ and nominal generation $(\hat{p}^G, \hat{q}^G)$, (c) inverter capacity $C$ in \eqref{eqn:inverter_constraints} and (d) safety limits $\underline{\nu}$, $\overline{\nu}$ and $\overline{\ell}$ in \eqref{eqn:lv_bounds}
        \State Compute universal voltage lower bound $\hat{\nu}^L = (\hat{v}^L)^2$ and current upper bound $\hat{\ell}^U = (\hat{i}^U)^2$ according to Algorithm~\ref{alg:fixed_point}
        \State Compute $W^{\text{iub2}}$ and $I^{\text{iub2}}$ according to \eqref{eqn:WI_iub2}, \eqref{eqn:cs_vectors} and \eqref{eqn:iub_lin}
        \State Compute $W^{\text{vub}}$ and $\hat{\nu}^{\text{nom}}$ according to \eqref{eqn:vub_vW}, \eqref{eqn:RXM} and \eqref{eqn:Z}
        \State Compute $W^{\text{vlb}}$, $M$ and $I^{\text{iub1}}$ using \eqref{eqn:vlb_W}, \eqref{eqn:RXM} and \eqref{eqn:iub_lin}
        \State Assemble matrix ${\bf H}$ and vector ${\bf b}$ according to \eqref{eqn:Hb}
        \State Apply Algorithm~\ref{alg:BFD} to decide if instance $({\bf H}, {\bf b})$ is feasible. If feasible, rollout schedule is defined by output ${\bf x}$
    \end{algorithmic}
\end{algorithm}

\section{Numerical Demonstration} \label{sec:demonstration}

We demonstrate the practical performance of the proposed software update rollout scheduling procedure with distribution system benchmarks. All computation is performed on a Mac Studio (2022) with M1 Ultra CPU and 128 GB of RAM, running MATLAB through the Rosetta 2 translation environment.

\subsection{Distribution System Benchmarks}
The benchmarks considered in this paper include: CIGRE 44-bus low-voltage (LV) radial network from \cite{pandapower.2018}, radial networks from Matpower 7.1 \cite{zimmerman2010matpower} and the REDS repository \cite{REDS_KA}. The CIGRE network contains one substation busbar, three feeder heads and 40 non-reference buses. In our study, the substation and the feeder heads are merged into one slack bus with squared voltage fixed at $\nu_0 = 1$ pu. The descriptions (topology, line parameters, load, etc.) of all other benchmarks can be found in \cite{zimmerman2010matpower} and \cite{REDS_KA}. We assume that for each benchmark all non-reference buses are equipped with smart inverters with the same (benchmark dependent) power rating.

\subsection{Quality of Universal Voltage and Current Bounds} \label{subsec:bound_quality}
We numerically demonstrate the quality of the voltage lower bound $\hat{\nu}^L$ and current upper bound $\hat{\ell}^U$ obtained by Algorithm~\ref{alg:fixed_point}. Due to \eqref{opt:vi_bounds} and Proposition~\ref{thm:fixed_point}, for any $(p,q)$ it holds that $\nu(p,q) \ge \nu^L \ge \hat{\nu}^L$ and $\ell(p,q) \le \ell^U \le \hat{\ell}^U$. Hence, if the difference $\nu(p,q) - \hat{\nu}^L$ is small then the error $\nu^L - \hat{\nu}^L$ must be small. A similar relation holds for $\ell(p,q)$, $\ell^U$ and $\hat{\ell}^U$. A suitable choice of $(p,q)$ is $(\tilde{p},\tilde{q}) = (-\hat{p}^L, -\hat{q}^L - C)$ because $(\tilde{p},\tilde{q})$ minimizes $\nu$ in \eqref{eqn:DF_nu} if the loss term $M \ell$ is ignored (also making $\ell$ large in in \eqref{eqn:DF_l} consequently). For the CIGRE LV benchmark we compare $\hat{\nu}^L$ and $\nu(\tilde{p},\tilde{q})$ (resp.~$\hat{\ell}^U$ and $\ell(\tilde{p},\tilde{q})$) in two cases. In both cases the total inverter power rating is the total active power load. However, for the first case the default load is considered resulting in typical forward power flows. On the other hand, for the second case ``background'' distributed generation is added so that the total net active power load is zero. This is a situation with significant distributed generation and reverse power flows. Fig.~\ref{fig:CIGRE_bounds} indicates that $\hat{\nu}^L$ and $\hat{\ell}^U$ obtained by Algorithm~\ref{alg:fixed_point} are of acceptable quality for the CIGRE benchmark.
\begin{figure}[!tb]
    \centering
    \includegraphics[width=88mm]{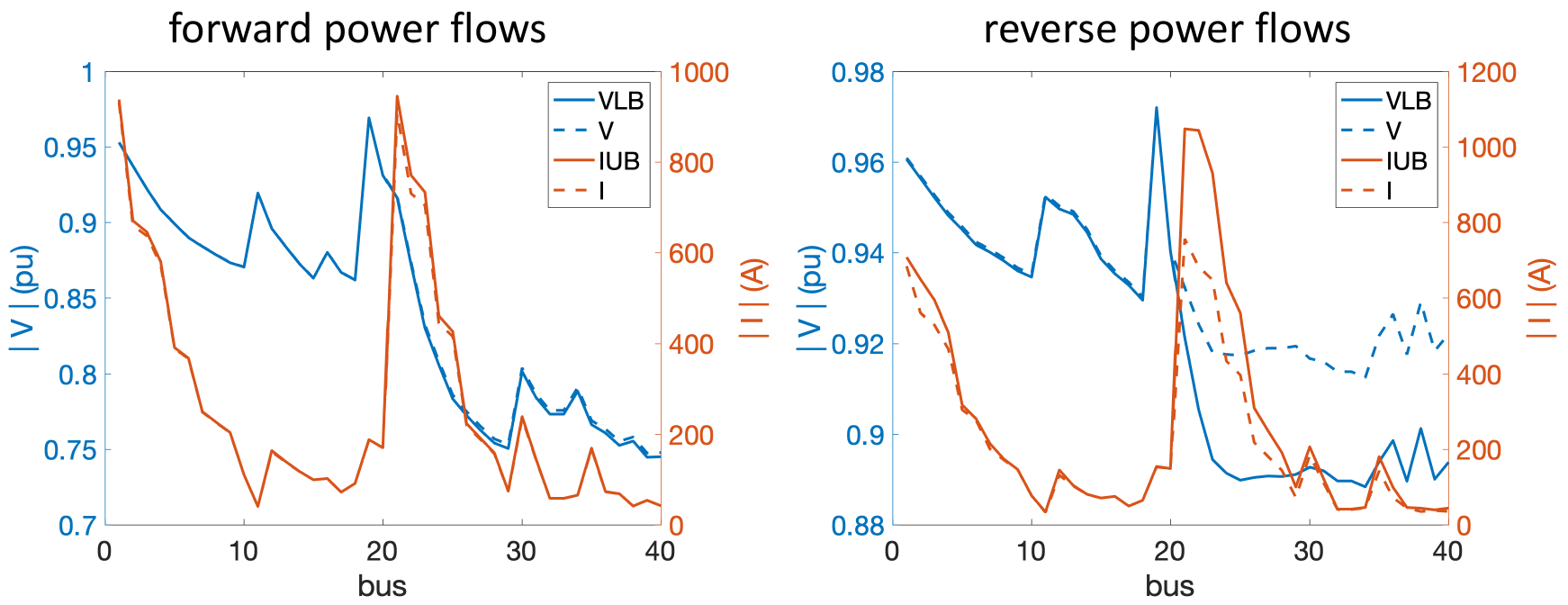}
    \caption{Quality demonstration of universal voltage and current bounds with CIGRE LV network. Solid blue lines are $\hat{\nu}^L$ obtained by Algorithm~\ref{alg:fixed_point}. Dotted blue lines are $\nu(\tilde{p},\tilde{q})$, upper bounds of $\hat{\nu}^L$. Solid orange lines are $\hat{\ell}^U$ obtained by Algorithm~\ref{alg:fixed_point}. Dotted orange lines are $\ell(\tilde{p},\tilde{q})$, lower bounds of $\hat{\ell}^U$. Left: typical load with forward power flows. The average relative error in voltage is $0.17 \%$ and the relative error in current is $1.1 \%$. Right: zero net active power load with significant reverse power flows. The average relative error in voltage is $1.4 \%$ and the relative error in current is $15 \%$.}
    \label{fig:CIGRE_bounds}
\end{figure}
The aforementioned experiment is repeated for selected benchmarks from \cite{zimmerman2010matpower} and \cite{REDS_KA}. Table~\ref{tab:quality} suggests similar conclusions. Except for the approximately 10\% average relative error for current upper bound in scenarios with significant reverse power flows, generally the error is negligible. We note that, in the experiment above, on average Algorithm~\ref{alg:fixed_point} converges in 6.5 fixed point iterations and the maximum number of iterations is 9 over all cases. The relative tolerance in Algorithm~\ref{alg:fixed_point} is $10^{-6}$.
\begin{table}[!htb]
    \centering
    \caption{Average (over buses or lines) relative error of universal voltage lower bounds and current upper bounds}
    \label{tab:quality}
    \begin{tabular}{|c|c|c|c|c|}
        \hline
        \multirow{2}{*}{case name} & \multicolumn{2}{|c|}{forward power flows} & \multicolumn{2}{|c|}{reverse power flows} \\
        \cline{2-5}
        & rel err $\nu$ & rel err $\ell$ & rel err $\nu$ & rel err $\ell$ \\
        \hline
        Matpower 33bw & 0.006\% & 0.21\% & 0.059\% & 5.6\% \\
        \hline
        Matpower 85 & 0.050\% & 0.38\% & 0.064\% & 1.3\% \\
        \hline
        Matpower 118zh & 0.0050\% & 0.22\% & 0.017\% & 11\% \\
        \hline
        Matpower 141 & 0.0018\% & 0.047\% & 0.023\% & 3.9\% \\
        \hline
        REDS 135+8 & $\approx 0$ & 0.022\% & 0.0070\% & 7.3\% \\
        \hline
        REDS 201+3 & $\approx 0$ & 0.012\% & $\approx 0$ & 4.5\% \\
        \hline
        REDS 873+7 & $\approx 0$ & 0.011\% & $\approx 0$ & 4.0\% \\
        \hline
        REDS 10476+84 & $\approx 0$ & 0.010\% & $\approx 0$ & 4.2\% \\
        \hline
    \end{tabular}
\end{table}


\subsection{CIGRE 44-bus LV Distribution Grid -- Heavy Load} \label{subsec:CIGRE_HL}
In this case study, we assume that the power rating of each smart inverter is $0.7/N = 0.015$ pu (the system's total inverter rating is 0.7 pu). The (linear) voltage limits in \eqref{eqn:lv_bounds} are $0.85$ pu and $1.1$ pu for all buses. For each line, the (linear) current upper limit is the maximum of 600 A and 1.5 times the nominal current of the line in the default load setting.

To simulate the scenario with heavy load, the nominal distributed generation is zero (i.e., $\hat{p}^G = 0$ and $\hat{q}^G = 0$) and the load is 130\% of the default load (both for $\hat{p}^L$ and $\hat{q}^L$) so that the total active power load is 0.89 pu. With all relevant data specified, the rollout problem in \eqref{opt:bin_packing} is set up and solved using Algorithm~\ref{alg:BFD}. The Gantt diagram of the obtained update schedule is shown in Fig.~\ref{fig:CIGRE_HL} (left), utilizing nine time slots.
\begin{figure}[!tb]
    \centering
    \includegraphics[width=88mm]{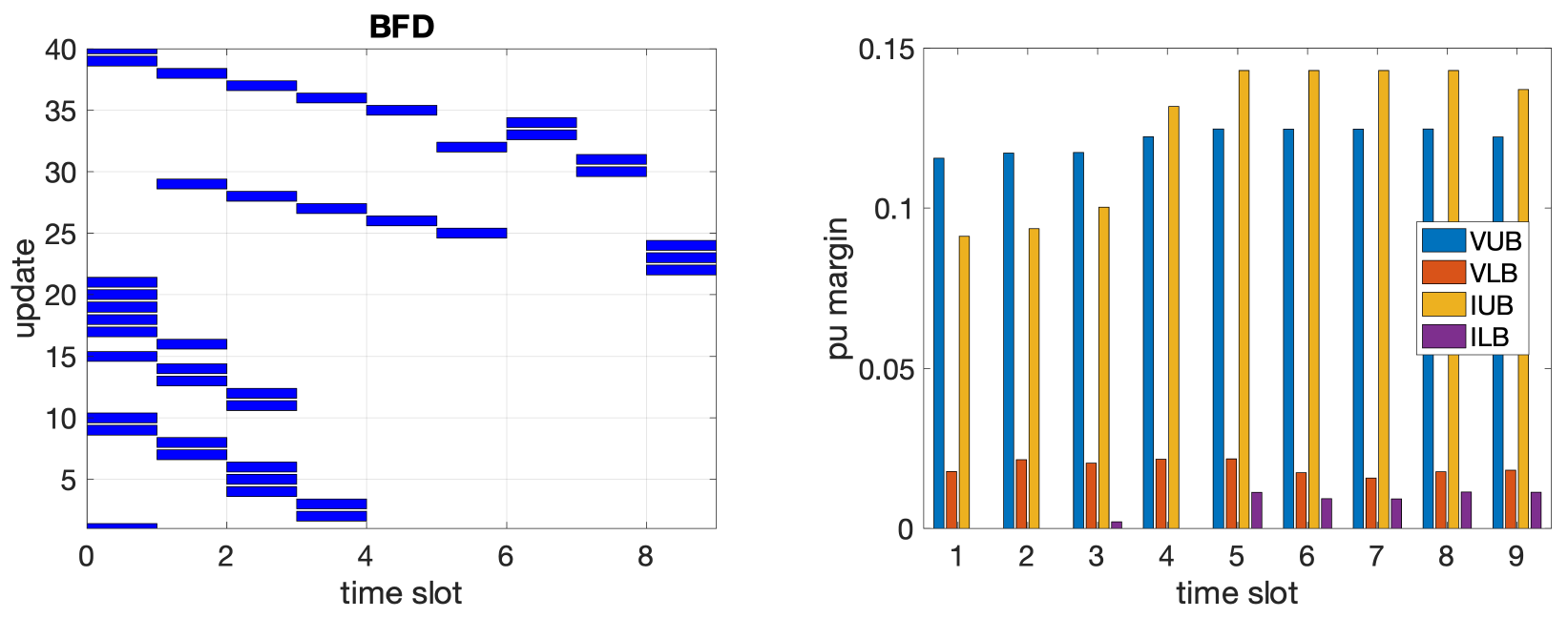}
    \caption{Results of CIGRE heavy load scenario case study concerning \eqref{opt:bin_packing}. Left: Gantt diagram. Right: voltage and current margins.}
    \label{fig:CIGRE_HL}
\end{figure}
Fig.~\ref{fig:CIGRE_HL} (right) also shows the voltage and current margins for each of the time slots defined as follows: for any time slot with assigned update $\mathcal{I} \subseteq \mathcal{N}$, the following problems are solved
\begin{subequations} \label{opt:vi_bounds_I}
    \begin{align}
        \nu^L_n(\mathcal{I}) &:= \min\limits_{p, q, P, Q, \nu, \ell} \; \nu_n \;\; \text{s.t.~\eqref{eqn:DF}, $(p,q) \in \mathcal{S}(\mathcal{I})$ in \eqref{eqn:SI}} \\
        \nu^U_n(\mathcal{I}) &:= \max\limits_{p, q, P, Q, \nu, \ell} \; \nu_n \;\; \text{s.t.~\eqref{eqn:DF}, $(p,q) \in \mathcal{S}(\mathcal{I})$ in \eqref{eqn:SI}} \\
        \ell^U_n(\mathcal{I}) &:= \max\limits_{p, q, P, Q, \nu, \ell} \; \ell_n \;\; \text{s.t.~\eqref{eqn:DF}, $(p,q) \in \mathcal{S}(\mathcal{I})$ in \eqref{eqn:SI}}
    \end{align}
\end{subequations}
to obtain the worst-case voltage for each non-reference bus $n \in \mathcal{N}$ and the worst-case current for each line $n \in \mathcal{L}$ for $\mathcal{I}$ associated with the time slot. Then, the worst-case quantities from \eqref{opt:vi_bounds_I} are compared with the safety limits in \eqref{eqn:lv_bounds} to obtain the relevant margins (nonnegative values mean limit satisfaction). It can be seen that the rollout schedule obtained by solving the proposed problem in \eqref{opt:bin_packing} with Algorithm~\ref{alg:BFD} indeed satisfies all safety limits.

As a comparison, the linearized DistFlow version of \eqref{opt:bin_packing} is considered. The following problem (i.e., an extension of the linearized DistFlow rollout problem in \cite{de2020minimum}) is solved to obtain the corresponding rollout schedule:
\begin{align} \label{opt:bin_packing_LDF}
    \begin{split}
        \text{minimize} & \quad T \\
        \text{subject to} & \quad \text{$\mathcal{I}_1, \mathcal{I}_2, \ldots, \mathcal{I}_T$ partition $\mathcal{N}$} \\
        & \quad {\bf \tilde{H}} {\bf 1}_{\mathcal{I}_j} \le {\bf \tilde{b}}, \quad j \in \{1, \ldots, T\}
    \end{split}
\end{align}
where
\begin{align*}
    {\bf \tilde{H}} = \begin{bmatrix}
        W^{\text{vub}} \\ \tilde{W}^{\text{vlb}}
    \end{bmatrix}, \quad {\bf \tilde{b}} = \begin{bmatrix}
        \overline{\nu} - \hat{\nu}^{\text{nom}} \\ \hat{\nu}^{\text{nom}} - \underline{\nu}
    \end{bmatrix}
\end{align*}
with $W^{\text{vub}}$ and $\hat{\nu}^{\text{nom}}$ defined in \eqref{eqn:vub_vW} and $\tilde{W}^{\text{vlb}} := 2 X D_C + 2 R D_{\hat{p}^G} + 2 X D_{\hat{q}^G}$ (i.e., the matrix in \eqref{eqn:vlb_W} with the last term removed). Fig.~\ref{fig:CIGRE_HL_LDF}
\begin{figure}[!tb]
    \centering
    \includegraphics[width=88mm]{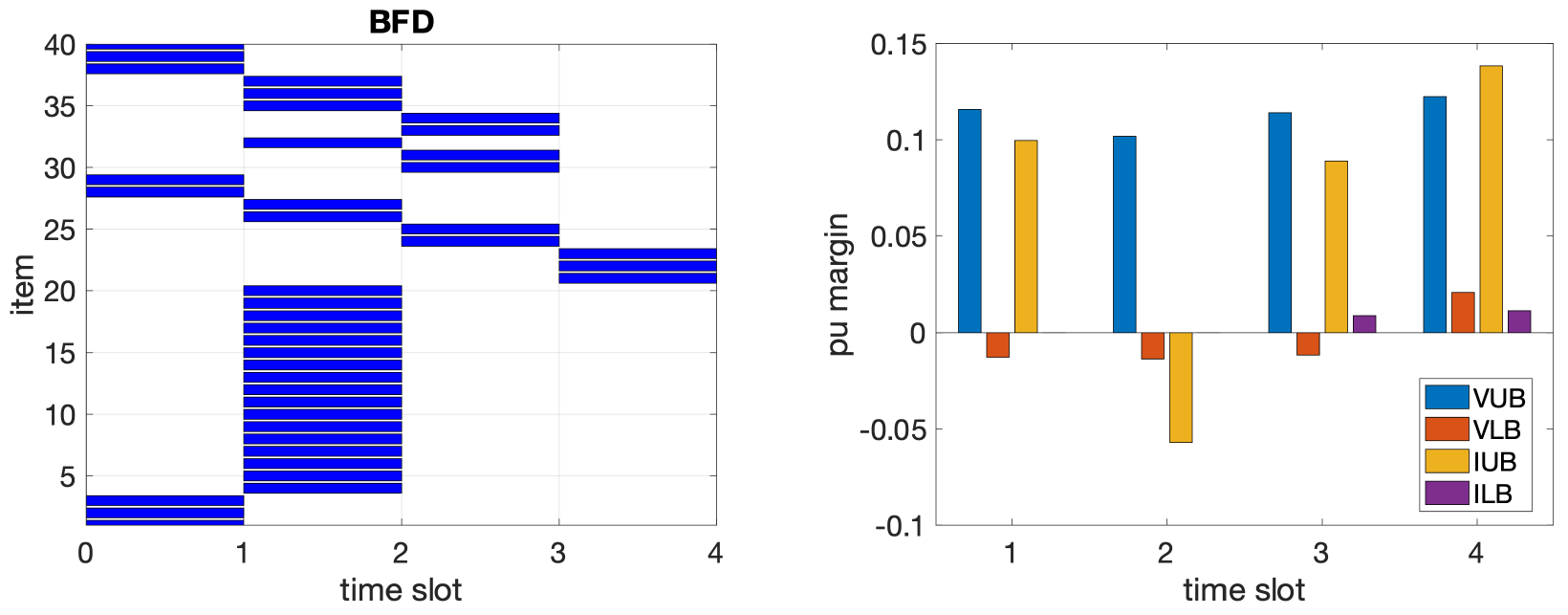}
    \caption{Results of CIGRE heavy load scenario case study concerning \eqref{opt:bin_packing_LDF}. Left: Gantt diagram. Right: voltage and current margins.}
    \label{fig:CIGRE_HL_LDF}
\end{figure}
shows the Gantt diagram and the margins of the rollout schedule obtained by solving \eqref{opt:bin_packing_LDF} using Algorithm~\ref{alg:BFD}. The linearized DistFlow schedule requires four versus nine time slots required by the proposed problem in \eqref{opt:bin_packing}. However, Fig.~\ref{fig:CIGRE_HL_LDF} shows negative voltage lower limit margins for the first three time slots, as well as negative current upper limit margin for the second time slot. This indicates voltage and current limit violations, as illustrated in Fig.~\ref{fig:CIGRE_HL_violation_LDF}.
\begin{figure}[!tb]
    \centering
    \includegraphics[width=88mm]{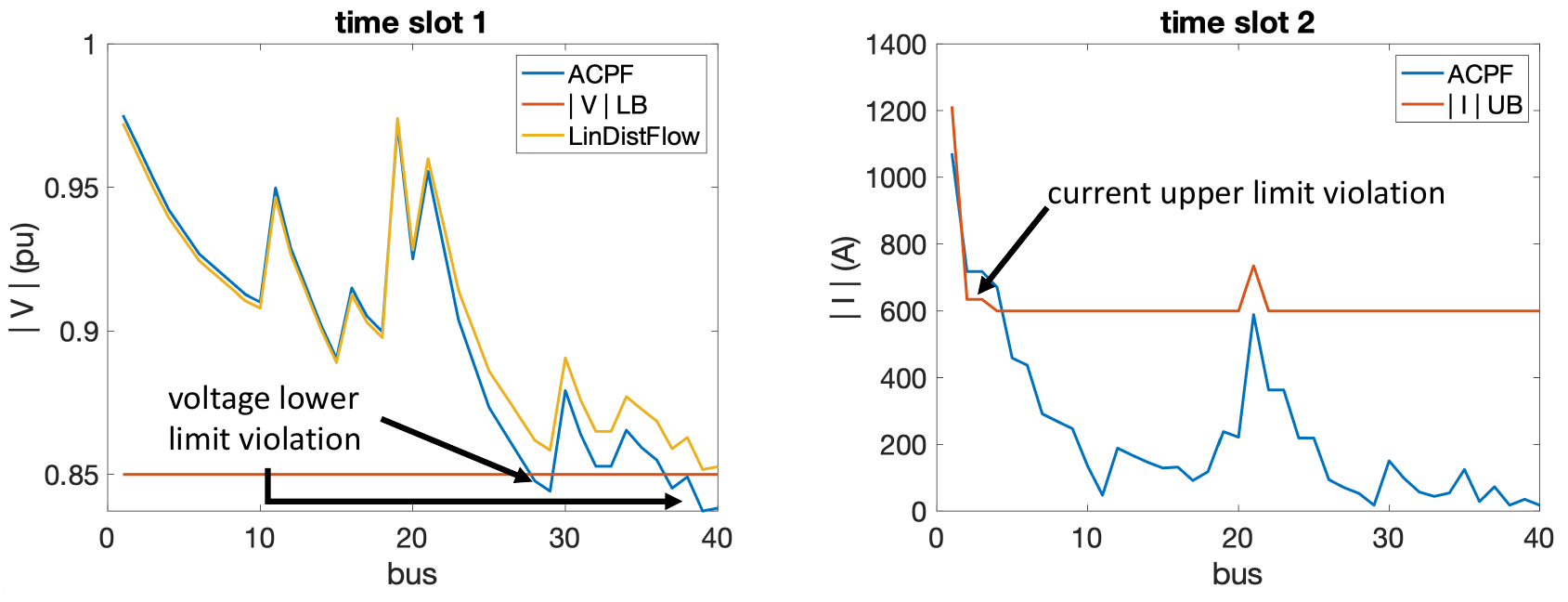}
    \caption{Voltage and current profiles showing safety limit violations of the rollout schedule due to \eqref{opt:bin_packing_LDF}. Left: voltage profile of time slot 1. Right: current profile of time slot 2. In both subfigures, ACPF denotes the voltage and current profiles obtained from AC power flow simulation using Matpower \cite{zimmerman2010matpower} with some power injections $(p,q) \in \mathcal{S}(\mathcal{I})$ for $\mathcal{I}$ conforming to the time slot. In the left, LinDistFlow denotes the minimum voltage due to linearized DistFlow equation (i.e., $(\hat{\nu}^{\text{nom}} + W^{\text{vub}} {\bf 1}_{\mathcal{I}})^{1/2}$), which suggests false limit satisfaction.}
    \label{fig:CIGRE_HL_violation_LDF}
\end{figure}
This justifies the apparently more complicated safety constraints proposed in \eqref{eqn:Hb} for the rollout scheduling problem, as opposed to the simpler linearized DistFlow version introduced earlier in \cite{de2020minimum}.

\subsection{CIGRE 44-bus LV Distribution Grid -- Significant DG}
Opposite to Section~\ref{subsec:CIGRE_HL}, this section considers the scenario with significant distributed generation and reverse power flows. In this scenario, the loads $\hat{p}^L$ and $\hat{q}^L$ retain their default values. The total inverter power rating is increased to 0.9 pu (versus total active power load of 0.69 pu). In addition, all inverters nominally generate active power according to their ratings (i.e., $\hat{p}^G = 0.9/N {\bf 1}$ pu and $\hat{q}^G = {\bf 0}$ pu). The voltage lower limits and current limits and the current upper limits are all uniform over their elements. They are respectively 0.9 pu, 1.2 pu and 800 A. Fig.~\ref{fig:CIGRE_HDG}
\begin{figure}[!tb]
    \centering
    \includegraphics[width=88mm]{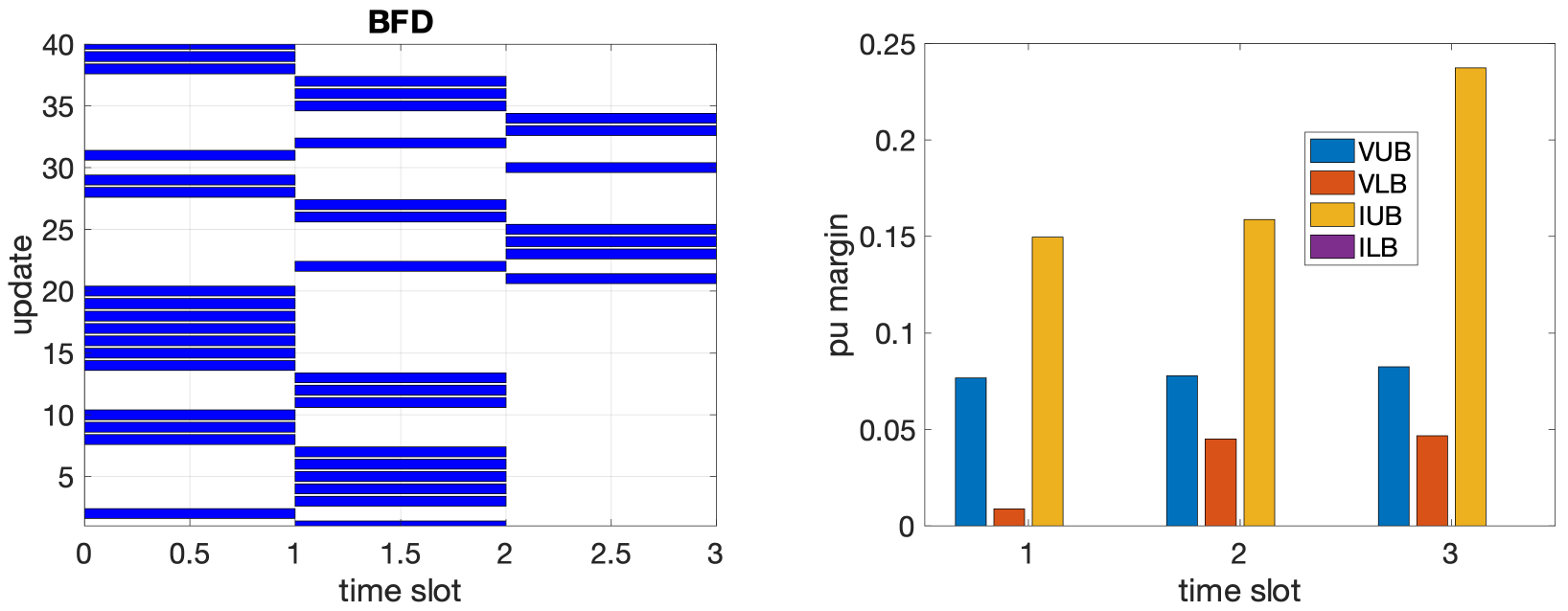}
    \caption{Results of CIGRE significant DG scenario case study concerning \eqref{opt:bin_packing}. Left: Gantt diagram. Right: voltage and current margins.}
    \label{fig:CIGRE_HDG}
\end{figure}
shows the Gantt diagram and the voltage and current margins for the rollout schedule obtained by the proposed method. In comparison, Fig.~\ref{fig:CIGRE_HDG_LDF}
\begin{figure}[!tb]
    \centering
    \includegraphics[width=88mm]{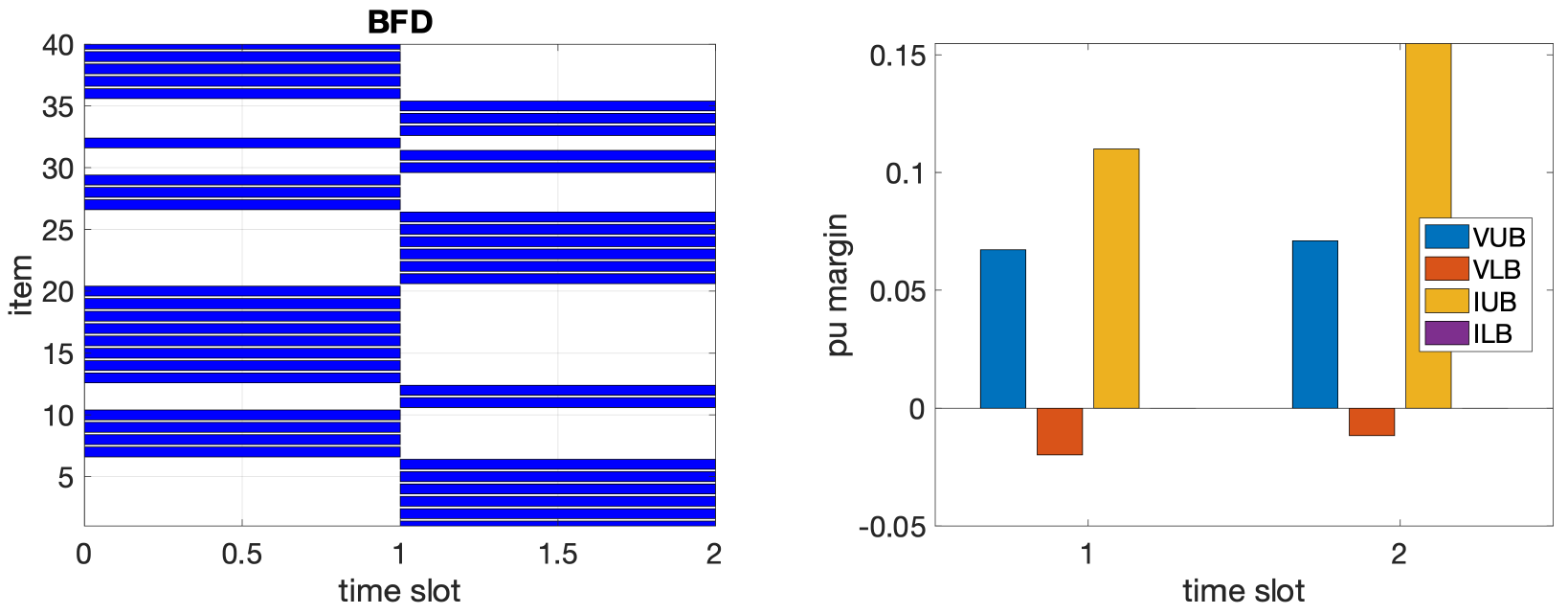}
    \caption{Results of CIGRE significant DG scenario case study concerning \eqref{opt:bin_packing_LDF}. Left: Gantt diagram. Right: voltage and current margins.}
    \label{fig:CIGRE_HDG_LDF}
\end{figure}
shows the analogous results for the rollout schedule obtained by solving the linearized DistFlow rollout scheduling problem in \eqref{opt:bin_packing_LDF}. The linearized version requires one fewer time slot. However, voltage lower limits are again violated.

\subsection{REDS 10,476-bus/83-feeder Distribution Grid} \label{subsec:10k}
This benchmark is the largest example from the repository in \cite{REDS_KA}, containing 10,476 non-reference buses and 83 feeders. We assume that inverters are installed in all non-reference buses with a uniform rating. The total rating is 68.4 pu, while the total active power load is 38.2 pu. The safety limits are, respectively, 0.9 pu, 1.1 pu and 600 A for voltage and current (uniform over all relevant elements). Solving the proposed rollout scheduling problem in \eqref{opt:bin_packing} with this benchmark yields a schedule with Gantt diagram and voltage and current margins shown in Fig.~\ref{fig:10k}.
\begin{figure}[!tb]
    \centering
    \includegraphics[width=88mm]{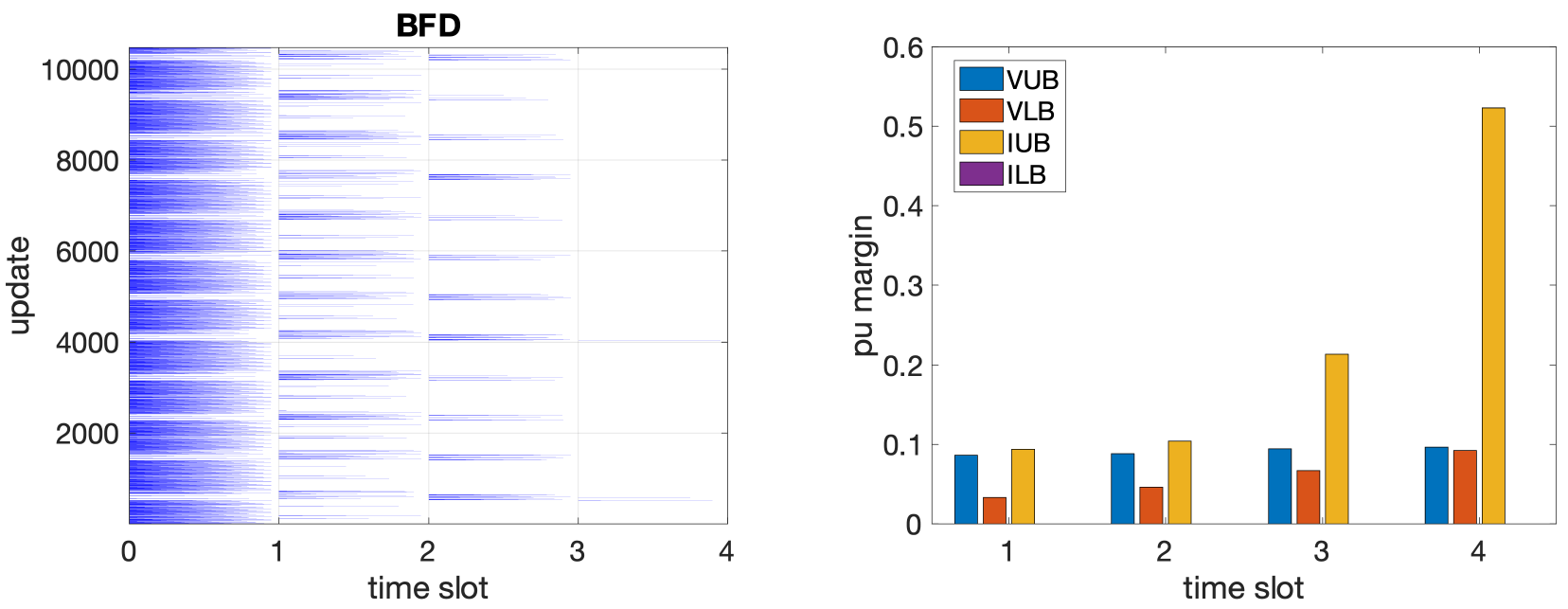}
    \caption{Results of 10,476-bus benchmark case study concerning \eqref{opt:bin_packing}. Left: Gantt diagram. Right: voltage and current margins.}
    \label{fig:10k}
\end{figure}
The schedule requires four time slots to arrange all 10,476 updates, and the total time to set up and solve the rollout problem is less than 3 seconds (with Algorithm~\ref{alg:BFD} requiring about 50\% of the time). It can be verified that it is impossible to assign all updates to one single time slot, excluding the trivial solution. To obtain a rollout schedule with four time slots using enumeration would require checking the second constraints in \eqref{opt:bin_packing} more than $5 \times 10^{14}$ times. Suppose each such examination requires $2 \times 10^{-5}$ seconds (based on the time recording in our experiment). The total enumeration would require more than 300 years, which is not suitable for real-time applications.

\subsection{Summary for all cases}
Instances of the rollout problem in \eqref{opt:bin_packing} derived from benchmarks from the REDS repository \cite{REDS_KA} and Matpower 7.1 \cite{zimmerman2010matpower} are solved using Algorithm~\ref{alg:BFD}. In all these cases, the voltage limits are 0.9 pu and 1.1 pu for all buses and the current upper limits are 600 A or above (depending on the nominal current due to default load setting). Typical load and inverter rating scenarios are assumed, which could be different from the scenario in Section~\ref{subsec:10k}. Table~\ref{tab:statistics} summarizes the results of the rollout schedules.
\begin{table}[!htb]
    \centering
    \caption{Computation time and number of time slots of rollout schedules obtained by the proposed method}
    \label{tab:statistics}
    \begin{tabular}{|c|c|c|c|c|c|}
    \hline
        case name & time & slots & case name & time & slots \\
        \hline
         REDS 13+3 & 4.98 ms & 1 & 33bw & 4.35 ms & 2 \\
         \hline
         REDS 29+1 & 4.77 ms & 2 & 51he & 6.08 ms & 1 \\
         \hline
         REDS 32+1 & 4.84 ms & 2 & 69 & 8.85 ms & 4\\
         \hline
         REDS 83+11 & 6.35 ms & 4 & 85 & 11.6 ms & 6\\
         \hline
         REDS 135+8 & 10.9 ms & 3 & 94pi & 20.5 ms & 11\\
         \hline
         REDS 201+3 & 30.7 ms & 4 & 118zh & 11.3 ms & 8\\
         \hline
         REDS 873+7 & 151 ms & 2 & 136ma & 10.2 ms & 3\\
         \hline
         REDS 10476+84 & 1.97 s & 2 & 141 & 20.1 ms & 5 \\
         \hline
    \end{tabular}
\end{table}
We note that the computation time for all cases, including the 10476-bus example, remains modest. In addition, the number of time slots used by the schedules are not excessive in comparison to the ``sequential schedules'' where each time slot holds exactly one update. This suggests, despite all approximations, that our proposed method is able to obtain safe and effective rollout schedules in real-time, even for distribution networks of substantial size.

\section{CONCLUSIONS}
Rapid system-level deployment and reconfiguration of control software is gaining prominence as power systems evolve to integrate large amounts of inverter-based distributed resources with little or zero inertia, thus requiring more dynamic control. As the power system is part of the critical infrastructure, its control software update should be carefully executed to prevent major reliability incidents. This requires proper accounting of the cyber-physical relationship between software behavior (e.g., update failure) and power system operational state (e.g., voltage and current). For radial distribution systems, the DistFlow equation accurately describes the underlying physics. However, the intrinsic nonlinearity renders the equation unfit for any real-time decision-making. The linearized DistFlow equation is generally a good balance between model complexity and fidelity. Nevertheless, as demonstrated in this paper, it may not be accurate enough to maintain system safety standard when conflicting operational goals are present (e.g., makespan minimization for software update rollout). This necessitates innovations in distribution system modeling exemplified by the proposed linearized relationships between software update failure and the worst-case voltage/current, which are theoretically justifiable with our detailed analysis and practically viable as demonstrated by benchmarks with complexity far exceeding the norm in the literature (e.g., 10476-bus vs 33-bus). Besides the software update rollout problem studied here, these linearized relationships can potentially enable online distribution system contingency monitoring and re-dispatch. In addition, to enable realistic large-scale applications computational innovations are needed. The reformulation of the rollout problem into a bin-packing problem amenable to efficient best-fit decreasing algorithm is one such example. In addition, the fixed point iteration procedure to quickly estimate the worst-case voltages and currents crucially enables the real-time large-scale application of the proposed rollout scheduling procedure. The investigation to apply the fixed point procedure to other situations (e.g., quick power flow analysis, contingency evaluation, etc.) appears to be worthwhile.





\section*{Appendix A -- Auxiliary Statements}
\begin{lemma} \label{thm:abr_max}
Let $a + j b \in \mathbb{C}$ and $r > 0$ be given. Also, let
\begin{align*}
    \theta := \begin{cases}
        \tan^{-1}(b/a), & \text{if $a \neq 0$ or $b \neq 0$} \\
        0, & \text{if $a = b = 0$}
    \end{cases}
\end{align*}
Then, the maximum objective value of the following problem
\begin{align} \label{opt:abr_max}
\begin{split}
    \underset{p,q}{\text{maximize}} & \quad |(a+jb)-(p+jq)| \\
    \text{subject to} & \quad p \ge 0, \;\;|p + jq| \le r
\end{split}
\end{align}
is 
\begin{align} \label{eqn:abr_max_value}
\max \big\{&|a+j(b+r)|, |a+j(b-r)|, |a+jb-r e^{j \theta}| \big\}
\end{align}
\end{lemma}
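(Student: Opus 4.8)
The plan is to recognize \eqref{opt:abr_max} as the maximization of a convex function over a compact convex set and then reduce it to an elementary one–variable trigonometric problem on the boundary arc. First I would note that the objective $(p,q)\mapsto|(a+jb)-(p+jq)|$ is the Euclidean norm of an affine function of $(p,q)$, hence convex, while the feasible set $F:=\{(p,q):p\ge 0,\ p^{2}+q^{2}\le r^{2}\}$ is compact and convex. By the standard fact that a convex function attains its maximum over a compact convex set at an extreme point (Bauer's maximum principle), the optimum of \eqref{opt:abr_max} is attained at an extreme point of $F$. The extreme points of the half-disc $F$ are exactly the points of the closed right semicircle $\{r e^{j\phi}:\phi\in[-\pi/2,\pi/2]\}$: a point with $p^{2}+q^{2}<r^{2}$ is interior to the disc, a point on the open segment $\{0\}\times(-r,r)$ is a non-extreme point of that segment, and the two corners $(0,\pm r)$ are the endpoints $\phi=\pm\pi/2$ of the arc. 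Hence it suffices to evaluate $\max_{\phi\in[-\pi/2,\pi/2]}|(a+jb)-re^{j\phi}|$.

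Next I would square and expand. Writing $a+jb=\rho e^{j\psi}$ with $\rho=|a+jb|\ge 0$ and $\psi=\arg(a+jb)$, one gets $|(a+jb)-re^{j\phi}|^{2}=\rho^{2}+r^{2}-2r\rho\cos(\psi-\phi)$, so maximizing over $\phi$ amounts to minimizing $\phi\mapsto\cos(\psi-\phi)$ on $[-\pi/2,\pi/2]$. Its derivative vanishes when $\psi-\phi\in\pi\mathbb{Z}$; the value $\psi-\phi=0$ is a maximum of the cosine and is irrelevant, while $\psi-\phi=\pm\pi$ gives the global minimum $-1$. Thus the minimizer is an interior critical point precisely when $\psi\mp\pi\in(-\pi/2,\pi/2)$ for some sign, i.e.\ when $\operatorname{Re}(a+jb)=a\le 0$; otherwise (when $a>0$) the minimizer is an endpoint $\phi=\pm\pi/2$. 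The identification to carry out is that, when $a\le 0$, this interior minimizer $\phi^{\star}$ with $\psi-\phi^{\star}=\pm\pi$ equals $\theta$: indeed $e^{j\phi^{\star}}=-e^{j\psi}=-(a+jb)/\rho$ has nonnegative real part and $\tan\phi^{\star}=(-b)/(-a)=b/a$, so $\phi^{\star}=\tan^{-1}(b/a)=\theta$. The degenerate cases $a=0$ and $a+jb=0$ are handled directly, since then $re^{j\theta}\in\{(0,r),(0,-r)\}$ already coincides with an arc endpoint.

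Finally I would assemble the conclusion: in every case the minimum of $\cos(\psi-\phi)$ over $[-\pi/2,\pi/2]$ — equivalently the maximum of the objective over $\mathrm{ext}(F)$ — is attained at some $\phi\in\{-\pi/2,\ \pi/2,\ \theta\}$, with corresponding objective values $|(a+jb)-re^{-j\pi/2}|=|a+j(b+r)|$, $|(a+jb)-re^{j\pi/2}|=|a+j(b-r)|$, and $|(a+jb)-re^{j\theta}|$. Conversely each of these three is the objective value at a feasible point of \eqref{opt:abr_max} (an extreme point of $F$), hence is at most the maximum; therefore the maximum equals $\max\{|a+j(b+r)|,|a+j(b-r)|,|a+jb-re^{j\theta}|\}$, which is \eqref{eqn:abr_max_value}. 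I expect the only genuinely delicate step to be the argument bookkeeping in the second paragraph — reconciling the $\tan^{-1}$-based definition of $\theta$ (whose range is $(-\pi/2,\pi/2)$) with the true argument $\psi$ of $a+jb$ across the sign of $a$, and clearing the $a=0$ boundary case — since the rest is routine extremization.
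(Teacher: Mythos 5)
Your proof is correct and follows essentially the same route as the paper's: reduce the optimization to the right semicircular arc and compare the two endpoints $\pm\pi/2$ with the single interior stationary point, identified with $\tan^{-1}(b/a)$. Your justifications (Bauer's maximum principle for the reduction to the arc, and the explicit law-of-cosines expansion locating and classifying the stationary point, including the $a=0$ boundary case) are more careful than the paper's terse perturbation argument and its unproved claim that the only stationary point is $\tan^{-1}(b/a)$, but the underlying argument is the same.
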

\begin{proof}
First we consider the special case where $a = b = 0$. In this case, the maximum of \eqref{opt:abr_max} is $r$ and all three terms in \eqref{eqn:abr_max_value} are equal to $r$. Thus, the statement is trivially true when $a = b = 0$. Next, we consider the case where at least one of $a$ or $b$ is not zero. Let $(p^*, q^*)$ be the argument of maximum of \eqref{opt:abr_max}. We claim that $|p^*+jq^*| = r$. Suppose not, then there exists $\Delta q \neq 0$ such that $|p^*+j(q^*+\Delta q)| = r$ and $|b-q^*-\Delta q| > |b-q^*|$. This implies that $(p^*, q^*+\Delta q)$ is a better feasible solution than $(p^*, q^*)$, leading to a contradiction. Hence, \eqref{opt:abr_max} can be re-parameterized as
\begin{align} \label{opt:abr_max1}
    \underset{\alpha \in [-\pi/2, \; \pi/2]}{\text{maximize}} \quad |r \cos(\alpha) - a + j (r \sin(\alpha) - b)|
\end{align}
It can be verified that the only stationary point of the objective for $-\pi/2 \le \alpha \le \pi/2$ is at $\tan^{-1}(b/a)$ (well-defined because at least one of $a$ or $b$ is nonzero). Therefore, the maximum of \eqref{opt:abr_max1} can be attained only at one of the three possibilities: $-\pi/2$, $\pi/2$ and $\tan^{-1}(b/a)$, which is $\theta$ according to the statement definition. Thus, the maximum of \eqref{opt:abr_max1} must be one of the following: $|a+j(b+r)|$, $|a+j(b-r)|$ or $|a+jb-r(\cos(\theta)+j \sin(\theta))|$, same as \eqref{eqn:abr_max_value}.
\end{proof}

\begin{lemma} \label{thm:Ds_max}
    Let $\mathcal{D}$ be the descendant matrix defined in Section~\ref{subsec:distribution_system_notations}. Let $\hat{p}^L, \hat{q}^L, \hat{p}^G, \hat{q}^G \in \R^N$ be given. Then, for any $n \in \mathcal{N}$ and $\mathcal{I} \subseteq \mathcal{N}$, the optimal objective value of
    \begin{align} \label{opt:Ds_max}
    \begin{split}
        \underset{p^G, q^G}{\text{maximize}} \quad & e_n^\top \big| \mathcal{D} (p + j q ) \big| \\
        \text{subject to} \quad & (p,q) \in \mathcal{S}(\mathcal{I}) \; \text{in \eqref{eqn:SI}}
    \end{split}
    \end{align}
    is
    \begin{align} \label{eqn:Ds_max_value}
        \max \Big\{&\!|a \! + \! j (b\!+\!r)|, |a \!+\! j (b\!-\!r)|, |a\!+\!j b \!-\! r e^{j \theta}|\! \Big\}
    \end{align}
    where
    \begin{align} \label{eqn:DS_max_parameters}
    \begin{split}
        &a := e_n^\top \mathcal{D} (\hat{p}^L - ({\bf 1} - {\bf 1}_{\mathcal{I}}) \odot \hat{p}^G ) \\
        &b := e_n^\top \mathcal{D} (\hat{q}^L - ({\bf 1} - {\bf 1}_{\mathcal{I}}) \odot \hat{q}^G ) \\
        &r := e_n^\top \mathcal{D} ({\bf 1}_{\mathcal{I}} \odot C) \\
        &\theta := \begin{cases} \tan^{-1}(b/a), & \text{if $a \neq 0$ or $b \neq 0$} \\ 0, & \text{if $a = b = 0$} \end{cases}
    \end{split}
    \end{align}
\end{lemma}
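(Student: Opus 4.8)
The plan is to collapse the vector-valued objective into a single complex modulus, show that the ``free'' part of it ranges over a half-disk, and then reduce to Lemma~\ref{thm:abr_max}. First I would note that $e_n^\top\mathcal{D}$ is exactly the $0$--$1$ indicator row vector of the descendant set $d(n)$, so that $e_n^\top\big|\mathcal{D}(p+jq)\big| = \big|\sum_{m\in d(n)}(p_m+jq_m)\big|$ by definition of the entrywise modulus. Substituting $p=p^G-\hat p^L$, $q=q^G-\hat q^L$ from \eqref{eqn:SI} and splitting the sum into the locked indices $m\in d(n)\setminus\mathcal{I}$ (where $p^G_m=\hat p^G_m$, $q^G_m=\hat q^G_m$) and the free indices $m\in d(n)\cap\mathcal{I}$ (where only $p^G_m\ge 0$ and $|p^G_m+jq^G_m|\le C_m$ are active), a direct grouping of real and imaginary parts shows the part not involving the decision variables equals $-(a+jb)$ with $a,b$ exactly as in \eqref{eqn:DS_max_parameters}. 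Thus the objective becomes $\big|w-(a+jb)\big|$ with $w:=\sum_{m\in d(n)\cap\mathcal{I}}(p^G_m+jq^G_m)$.

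Next I would characterize the reachable set of $w$ as $(p,q)$ ranges over $\mathcal{S}(\mathcal{I})$, claiming it is the half-disk $\{\,p'+jq':p'\ge 0,\ |p'+jq'|\le r\,\}$ with $r=e_n^\top\mathcal{D}({\bf 1}_{\mathcal{I}}\odot C)=\sum_{m\in d(n)\cap\mathcal{I}}C_m$. One inclusion is immediate: $\operatorname{Re}(w)=\sum_m\operatorname{Re}(p^G_m+jq^G_m)\ge 0$, and $|w|\le\sum_m|p^G_m+jq^G_m|\le\sum_m C_m=r$ by the triangle inequality. For the converse, when $r>0$ any admissible $w$ is realized by the proportional choice $p^G_m+jq^G_m:=(C_m/r)\,w$, which keeps $p^G_m=(C_m/r)\operatorname{Re}(w)\ge 0$ and $|p^G_m+jq^G_m|=(C_m/r)|w|\le C_m$; when $r=0$ the constraints force each such injection to be $0$, so the reachable set is $\{0\}$, again a (degenerate) half-disk.

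Finally, since $\big|w-(a+jb)\big|=\big|(a+jb)-w\big|$, maximizing the objective over $\mathcal{S}(\mathcal{I})$ is precisely problem \eqref{opt:abr_max} with data $(a,b,r)$ and $r>0$, whose optimal value is \eqref{eqn:abr_max_value} by Lemma~\ref{thm:abr_max}; this coincides with \eqref{eqn:Ds_max_value} once one checks that the angle $\theta$ here and in Lemma~\ref{thm:abr_max} are defined by the same convention. The degenerate case $r=0$ is dispatched separately: the optimal value is $|a+jb|$, and each of the three terms in \eqref{eqn:Ds_max_value} reduces to $|a+jb|$ because $r=0$.

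The only step that is not pure bookkeeping is the second one --- the fact that a sum of half-disks, all constrained to the same half-plane $\operatorname{Re}(\cdot)\ge 0$, is again a half-disk whose radius is the sum of the radii. The proportional-scaling construction above makes this elementary; with it in hand, the lemma follows by a direct appeal to Lemma~\ref{thm:abr_max}.
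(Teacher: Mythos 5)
Your proof is correct and follows essentially the same route as the paper's: rewrite the objective as $|(a+jb)-w|$ with $w$ the aggregate injection of the free buses in $d(n)\cap\mathcal{I}$, identify the reachable set of $w$ as the half-disk of radius $r$, and invoke Lemma~\ref{thm:abr_max}. The only difference is that you explicitly verify (via the proportional-scaling construction) that the Minkowski sum of half-disks sharing the half-plane $\operatorname{Re}(\cdot)\ge 0$ is the half-disk of summed radius, and you dispatch the degenerate case $r=0$ separately --- both points the paper leaves as ``can be verified,'' so this is a welcome completion rather than a departure.
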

\begin{proof}
    Note that for any $x \in \R^N$, $e_n^\top \mathcal{D} x = \sum\limits_{m \in d(n)} x_m$. Hence, with $(p,q) \in \mathcal{S}(\mathcal{I})$ the objective function can be written as
    \begin{align*}
        & \left| \sum\limits_{m \in d(n)} (p^G_m - \hat{p}^L_m) + j \sum\limits_{m \in d(n)} (q^G_m - \hat{q}^L_m) \right| \\
        = & \left| \big( \sum\limits_{m \in d(n)} \hat{p}^L_m - \!\!\! \sum\limits_{m \in d(n) \setminus \mathcal{I}} \hat{p}^G_m \big) \! + \! j \big( \sum\limits_{m \in d(n)} \hat{q}^L_m - \!\!\! \sum\limits_{m \in d(n) \setminus \mathcal{I}} \hat{q}^G_m \big) \right. \\
        & \left. - \; \sum\limits_{m \in d(n) \cap \mathcal{I}} (p^G_m + j q^G_m) \right|
    \end{align*}
    According to the definition of $\mathcal{S}(\mathcal{I})$, for each $m \in \mathcal{I}$, the set of all possible $p^G_m + j q^G_m$ is the half-disk $\Omega(C_m) := \{(x,y) \mid x \ge 0, |x + j y| \le C_m\}$. Thus, the set of all possible sum $\sum\limits_{m \in d(n) \cap \mathcal{I}} (p^G_m + j q^G_m)$ is the Minkowski sum $\sum\limits_{m \in d(n) \cap \mathcal{I}} \Omega(C_m)$, which can be verified to be the half-disk $\Omega(\sum\limits_{m \in d(n) \cap \mathcal{I}} C_m)$. By noting that
    \begin{align*}
        & \sum\limits_{m \in d(n)} \hat{p}^L_m - \!\!\! \sum\limits_{m \in d(n) \setminus \mathcal{I}} \hat{p}^G_m = e_n^\top \mathcal{D} (\hat{p}^L - ({\bf 1} - {\bf 1}_{\mathcal{I}}) \odot \hat{p}^G ) = a \\
        & \sum\limits_{m \in d(n)} \hat{q}^L_m - \!\!\! \sum\limits_{m \in d(n) \setminus \mathcal{I}} \hat{q}^G_m = e_n^\top \mathcal{D} (\hat{q}^L - ({\bf 1} - {\bf 1}_{\mathcal{I}}) \odot \hat{q}^G ) = b \\
        & \sum\limits_{m \in d(n) \cap \mathcal{I}} C_m = e_n^\top \mathcal{D} ({\bf 1}_{\mathcal{I}} \odot C) = r
    \end{align*}
    problem~\eqref{opt:Ds_max} can be rewritten as maximizing $|a + j b - (x + j y)|$ with respect to $x$ and $y$, such that $x \ge 0$ and $|x + j y| \le r$. According to Lemma~\ref{thm:abr_max}, the maximum objective value is given in \eqref{eqn:Ds_max_value}.
\end{proof}

\section*{Appendix B -- Proof of Proposition~\ref{thm:fixed_point}}
It suffices to show the following induction: suppose for some $k$ it holds that $v^k \le (\nu^L)^{1/2}$ and $i^k \ge (\ell^U)^{1/2}$, then $v^{k+1} \le (\nu^L)^{1/2}$ and $i^{k+1} \ge (\ell^U)^{1/2}$.

For any $n \in \mathcal{N}$, according to \eqref{opt:vi_bounds} and \eqref{eqn:DF_vi}, there exist some optimizing pair $(p^{G(n)}, q^{G(n)}) \in \R^N \times \R^N$ satisfying \eqref{eqn:inverter_constraints} with $(p^G, q^G)$ interpreted as $(p^{G(n)}, q^{G(n)})$ such that $\nu^L_n = v_n^2$ with $v$ (together with $i$) satisfying
\begin{align*}
    \begin{split}
        v &= \big( \nu_0 {\bf 1} + 2 R (p^{G(n)} - \hat{p}^L) + 2 X (q^{G(n)} - \hat{q}^L) + M i^2 \big)^{1/2} \\
        i &= v^{-1} \odot\big| \mathcal{D}((p^{G(n)} - \hat{p}^L) \! + \! j (q^{G(n)} - \hat{q}^L)) \! - \! (\mathcal{D} \! - \! I) D_z i^2 \big|
    \end{split}
\end{align*}
We note that all entries of $M$ are non-positive. This, together with $i^k \ge (\ell^U)^{1/2}$ by the induction assumption and $(\ell^U)^{1/2} \ge i$ by \eqref{opt:vi_bounds}, implies that $M i^2 \ge M (i^k)^2$. Thus, for any $n \in \mathcal{N}$
\begin{align*}
    &(\nu^L_n)^{1/2} \\
    = & \big( \nu_0 + e_n^\top \big( 2 R (p^{G(n)} - \hat{p}^L) + 2 X (q^{G(n)} - \hat{q}^L) + M i^2\big) \big)^{1/2} \\
    \ge & \big( \nu_0 \! + \! e_n^\top \big( 2 R (p^{G(n)} - \hat{p}^L) \! + \! 2 X (q^{G(n)} - \hat{q}^L) \! + \! M (i^k)^2\big) \big)^{1/2} \\
    \ge & \min_{\text{$(p^G, q^G)$ satisfying \eqref{eqn:inverter_constraints}}} \big\{ \big( \nu_0 + e_n^\top \big( 2 R (p^G - \hat{p}^L) \\
    &+ 2 X (q^G - \hat{q}^L) + M (i^k)^2 \big) \big)^{1/2} \big\} \\
    = & e_n^\top \big(\nu_0 {\bf 1} \! - \! 2 R \hat{p}^L - 2 X (\hat{q}^L \! + \! C) \! + \! M (i^k)^2 \big)^{1/2} \\
    = & v_n^{k+1}
\end{align*}

Similarly, there exists some $(p^{g(n)}, q^{g(n)}) \in \R^N \times \R^N$ satisfying \eqref{eqn:inverter_constraints} with $(p^G, q^G)$ interpreted as $(p^{g(n)}, q^{g(n)})$ such that $\ell^U_n = i_n^2$ with $i$ (together with $v$) satisfying
\begin{align*}
    \begin{split}
        v &= \big( \nu_0 {\bf 1} + 2 R (p^{g(n)} - \hat{p}^L) + 2 X (q^{g(n)} - \hat{q}^L) + M i^2 \big)^{1/2} \\
        i &= v^{-1} \odot\big| \mathcal{D}((p^{g(n)} - \hat{p}^L) \! + \! j (q^{g(n)} - \hat{q}^L)) \! - \! (\mathcal{D} \! - \! I) D_z i^2 \big|
    \end{split}
\end{align*}
Since all entries of $(\mathcal{D}-I)D_r$ and $(\mathcal{D}-I)D_x$ are nonnegative, $i^k \ge (\ell^U)^{1/2}$ and $v^k \le (\nu^L)^{1/2}$ by the induction assumption, $(\ell^U)^{1/2} \ge i$ and $(\nu^L)^{1/2} \le v$ by \eqref{opt:vi_bounds}, it holds that (a) $v \ge v^k$, (b) $(\mathcal{D}-I)D_r i^2 \le (\mathcal{D}-I)D_r (i^k)^2$ and (c) $(\mathcal{D}-I)D_x i^2 \le (\mathcal{D}-I)D_x (i^k)^2$. Hence, for any $n \in \mathcal{N}$
\begin{align*}
    &(\ell_n^U)^{1/2} \\
    = & v_n^{-1} e_n^\top \big| \mathcal{D}((p^{g(n)} - \hat{p}^L) \! + \! j (q^{g(n)} - \hat{q}^L)) \! - \! (\mathcal{D} \! - \! I) D_z i^2 \big| \\
    \le & {(v_n^k)}^{-1} \!\! e_n^\top \big(\big| \mathcal{D}((p^{g(n)} \!\! - \! \hat{p}^L) \!\! + \!\! j (q^{g(n)} \!\! - \! \hat{q}^L)) \big| \!\! + \!\! \big|(\mathcal{D} \!\! - \!\! I) D_z (i^k)^2 \big| \big) \\
    \le & \max_{\text{$(p^g, q^g)$ satisfying \eqref{eqn:inverter_constraints}}} {(v_n^k)}^{-1} \big| e_n^\top \mathcal{D} ((p^g - \hat{p}^L) + j (q^g - \hat{q}^L) ) \big| \\
    & + {(v_n^k)}^{-1} e_n^\top \big|(\mathcal{D} - I) D_z (i^k)^2 \big|
\end{align*}
According to Lemma~\ref{thm:Ds_max} with $\mathcal{I} = \mathcal{N}$, the first term above can be rewritten as $\bar{S}_n := \max\{|a_n + j (b_n + r_n)|, |a_n + j (b_n - r_n)|, |a_n + j b_n - r_n (\cos(\theta_n) + j \sin(\theta_n)|\}$, where $a_n = e_n^\top \mathcal{D} \hat{p}^L$, $b_n = e_n^\top \mathcal{D} \hat{q}^L$, $r_n = e_n^\top \mathcal{D}C$ and $\theta_n = \tan^{-1}(b_n/a_n)$ if at least one of $a_n$ and $b_n$ is nonzero (otherwise $\theta_n = 0$). Therefore,
\begin{align*}
    (\ell_n^U)^{1/2} \le {(v_n^k)}^{-1} \left( \bar{S}_n + e_n^\top \big|(\mathcal{D} - I) D_z (i^k)^2 \big| \right) = i_n^{k+1} \;\;\; \qed
\end{align*}


\bibliographystyle{IEEEtran}
\bibliography{references}

\begin{IEEEbiographynophoto}
{Kin Cheong Sou} received a Ph.D. degree in Electrical Engineering and Computer Science at Massachusetts Institute of Technology in 2008. From 2008 to 2010 he was a postdoctoral researcher at Lund University, Lund, Sweden. From 2010 to 2012 he was a postdoctoral researcher at KTH Royal Institute of Technology, Stockholm, Sweden. Between 2013 and 2016 he was an assistant professor with the department of Mathematical Sciences, Chalmers University of Technology and the University of Gothenburg, Sweden. Dr. Sou is now an associate professor with the department of Electrical Engineering at the National Sun Yat-sen University in Taiwan. His research interests include optimization and system analysis with power systems applications.
\end{IEEEbiographynophoto}
\vspace{-10mm}
\begin{IEEEbiographynophoto}
{Henrik Sandberg} is Professor at the Division of Decision and Control Systems, KTH Royal Institute of Technology, Stockholm, Sweden. He received the M.Sc. degree in engineering physics and the Ph.D. degree in automatic control from Lund University, Lund, Sweden, in 1999 and 2004, respectively. From 2005 to 2007, he was a Post-Doctoral Scholar at the California Institute of Technology, Pasadena, USA. In 2013, he was a Visiting Scholar at the Laboratory for Information and Decision Systems (LIDS) at MIT, Cambridge, USA. He has also held visiting appointments at the Australian National University and the University of Melbourne, Australia. His current research interests include security of cyber-physical systems, power systems, model reduction, and fundamental limitations in control. Dr. Sandberg was a recipient of the Best Student Paper Award from the IEEE Conference on Decision and Control in 2004, an Ingvar Carlsson Award from the Swedish Foundation for Strategic Research in 2007, and a Consolidator Grant from the Swedish Research Council in 2016. He has served on the editorial boards of IEEE Transactions on Automatic Control and the IFAC Journal Automatica. He is Fellow of the IEEE.
\end{IEEEbiographynophoto}

\end{document}